\newtheorem{theorem}{Theorem}[section]
\newtheorem{lemma}[theorem]{Lemma}
\newtheorem{corollary}[theorem]{Corollary}
\newtheorem{definition}[theorem]{Definition}
\theoremstyle{remark}
\newtheorem{remark}[theorem]{Remark}
\newtheorem{example}[theorem]{Example}
\newtheorem{exercise}[theorem]{Exercise}
\newcommand{\R}{\mathbb{R}}
\newcommand{\Z}{\mathbb{Z}}
\newcommand{\N}{\mathbb{N}}
\newcommand{\cc}[1]{\overline{#1}}
\newcommand{\term}[1]{\emph{#1}}
\renewcommand{\mod}{\operatorname{\rm mod}}
\DeclareMathOperator{\diag}{diag}
\DeclareMathOperator{\Hess}{Hess}
\title{An elementary introduction to quantum graphs}
\author{Gregory Berkolaiko}
\address{Department of Mathematics, Texas A\&M University, College
 Station, TX 77843-3368, USA}
\begin{document}

\begin{abstract}
 We describe some basic tools in the spectral theory of Schr\"odinger
 operator on metric graphs (also known as ``quantum graphs'') by
 studying in detail some basic examples.  The exposition is kept as
 elementary and accessible as possible.  In the later sections we
 apply these tools to prove some results on the count of zeros of the
 eigenfunctions of quantum graphs.
\end{abstract}

\maketitle

%%%%%%%%%%%%%%%%%%%%%%%%%%%%%%%%%%%%%%%%%%%%%%
\section{Introduction}

Studying operators of Schr\"odinger type on metric graphs is a growing
subfield of mathematical physics which is motivated both by direct
applications of the graph models to physical phenomena and by use of
graphs as a simpler setting in which to study complex phenomena
of quantum mechanics, such as Anderson localization, universality of
spectral statistics, nodal statistics, scattering and resonances, to
name but a few.  

The name ``quantum graphs'' is most likely a shortening of the title
of the article ``Quantum Chaos on Graphs'' by Kottos and
Smilansky \cite{KotSmi_prl97}.  The model itself has been studied well
before the name appeared, for example in
\cite{Pau_jcp36,RueSch_jcp53,Rot_crasp83,Bel_laa85,Nic_incol85}.

Several reviews and monographs cover various directions within the
quantum graphs research \cite{GnuSmi_ap06,Post_book12,Mugnolo_book}.
However, when starting a research project with students, both
(post-) graduate and undergraduate, the author felt that a more
elementary introduction would be helpful.  The present manuscript grew
out of the same preparatory lecture repeated, at different points of
time, to several students.  It is basically a collection of minimal
examples of quantum graphs which already exhibit behavior typical to
larger graphs.  We supply the examples with pointers to the more
general facts and theorems.  Only in the last sections we explore a
research topic (the nodal statistics on graphs) in some depth.

For obvious reasons the pointers often lead to the monograph
\cite{BerKuc_graphs}; the notation is kept in line with that book, too.

%%%%%%%%%%%%%%%%%%%%%%%%%%%%%%%%%%%%%%%%%%%%
\section{Schr\"odinger equation on a metric graph}

Consider a graph $\Gamma=(V,E)$, where $V$ is the set of vertices and
$E$ is the set of edges.  Each edge connects a pair of vertices; we
allow more than one edge running between any two vertices.  We also
allow edges connecting vertices to themselves (\term{loops}).  This
freedom creates some notational difficulties, so we ask the reader to
be flexible and forgiving.

Each edge $e$ is assigned a positive length $L_e$ and is thus
identified with an interval $[0, L_e]$ (the direction is chosen
arbitrarily and is irrelevant to the resulting theory).  This makes
$\Gamma$ a \term{metric graph}.  Now a function on a graph is just a
collection of functions defined on individual edges.

The eigenvalue equation for the Schr\"odinger operator is 
\begin{equation}
  \label{eq:eig}
  -\frac{d^2f}{dx^2} + V(x)f(x) = \lambda f(x),
\end{equation}
which is to be satisfied on every edge, in addition to the vertex
matching conditions as follows
\begin{align}
  \label{eq:cont}
  &f(x) \ \mbox{is continuous},\\
  \label{eq:current_cons}
  &\sum_{e \sim v} \frac{df}{dx}(v) = 0.
\end{align}
The continuity means that the values at the vertex agree among all
functions living on the edges attached (or \term{incident}) to the
vertex.  In the second condition (often called \term{current
  conservation condition}), the sum is over all edges attached to the
vertex and the derivative are all taken in the same direction: from
the vertex into the edge.  A looping edge contributes two terms to the
sum, one for each end of the edge.

The function $V(x)$ is called the \term{electric potential} but we
will set it identically to zero in all of the examples below.  Vertex
conditions \eqref{eq:cont}-\eqref{eq:current_cons} are called
\emph{Neumann conditions}\footnote{Other names present in the
  literature are ``Kirchhoff'', ``Neumann-Kirchhoff'', ``standard'',
  ``natural'' etc.}; they can be generalized significantly, but before
we give any more theory, let us consider some examples.

%%%%%%%%%
\subsection{Example: a trivial graph --- an interval}

An interval $[0, L]$ is the simplest example of a graph; it has two
vertices (the endpoints of the interval) and one edge.  The continuity
condition is empty at every vertex since there is only one edge.  The
current conservation condition at the vertex $0$ becomes
\begin{equation}
  \label{eq:NC1}
  f'(0) = 0,
\end{equation}
and at the vertex $L$ becomes
\begin{equation}
  \label{eq:NC2}
  -f'(L) = 0.
\end{equation}
The minus sign appeared because we agreed to direct the derivatives
into the edge; of course it is redundant in this particular case.

Let $V(x) \equiv 0$ and consider first the positive eigenvalues,
$\lambda > 0$.  The eigenvalue equation becomes
\begin{equation}
  \label{eq:eig_eq_V0}
  -f'' = k^2 f,
\end{equation}
where for convenience we substituted $\lambda=k^2$.  This is a second
order linear equation with constant coefficients which for $k>0$ is readily
solved by
\begin{equation}
  \label{eq:f_solution}
  f(x) = C_1 \cos(kx) + C_2 \sin(kx).
\end{equation}
Applying the first vertex condition $f'(0)=0$ we get $C_2=0$ and $f(x)
= C_1 \cos(kx)$.  The second vertex condition becomes
\begin{equation}
  \label{eq:f_condition}
  C_1 k \sin(kL) = 0,
\end{equation}
which imposes a condition on $k$ but does nothing to determine $C_1$
(naturally we are not interested in the trivial solution $f(x) \equiv
0$).  We thus get the \emph{eigenvalues} $\lambda = k^2 = \left(\pi
  n/L\right)^2$, $n=1,2,\ldots$ with the corresponding eigenfunctions
$f(x) = \cos\left(\pi n x/L\right)$ defined up to an overall constant
multiplier (as befits eigenvectors and eigenfunctions).

There is one other eigenvalue in the spectrum that we missed:
$\lambda=0$ with the eigenfunction $f(x) \equiv 1$.  While this agrees with
the above formulas with $n=0$, the premise of equation
(\ref{eq:f_solution}) is no longer correct when $\lambda=0$.

\begin{exercise}
  \label{hw:no_neg_eig}
  Solve the eigenvalue equation with $\lambda < 0$ and show that the
  vertex conditions \eqref{eq:NC1} and \eqref{eq:NC2} are never
  satisfied simultaneously (ignore the trivial solution $f(x)\equiv 0$).
\end{exercise}

\begin{exercise}
  \label{hw:nonneg}
  Integrate by parts the scalar product
  \begin{equation}
    \label{eq:scalar_prod}
    \left\langle f, -f'' \right\rangle = \int_0^L \cc{f(x)}
    \left(-f''(x)\right) dx,
  \end{equation}
  to obtain an expression that is obviously non-negative, and thus show
  that it is not necessary to solve \eqref{eq:eig_eq_V0} to conclude
  that there are no negative eigenvalues.
\end{exercise}

We did not try to look for complex eigenvalues.  This is because the
Schr\"odinger operator we defined is self-adjoint (see Thm 1.4.4 of
\cite{BerKuc_graphs}) and therefore has real spectrum.  The spectrum
in the above example is discrete: all eigenvalues are isolated and of
finite multiplicity.  This is true for any graph which is compact (has
finitely many edges, all of which have finite length), see Thm 3.1.1
of \cite{BerKuc_graphs}.  The proof outlined in
Exercise~\ref{hw:nonneg} works for general graphs with Neumann
conditions at all vertices.  The multiplicity of the eigenvalue 0 in
the spectrum can be shown to equal the number of the connected
components of the graph.

%%%%%%%%%
\subsection{Example: a trivializable graph with a vertex of degree
  two}
\label{sec:fake_vertex}

Consider now a graph consisting of two consecutive intervals,
$[0,L_1]$ and $[L_1, L_1+L_2]$.  We do not really have to parametrize
the edges starting from 0, so in this example we will employ the
``natural'' parametrization.

Denote the components of eigenfunction living on the two intervals by
$f_1$ and $f_2$ correspondingly.  Solving the equation on the first
edge and enforcing the Neumann condition at $0$ results in $f_1(x) =
C\cos(kx)$.  The conditions at the point $L_1$ are
\begin{align}
  &f_1(L_1) = f_2(L_1), \\
  &-f_1'(L_1) + f_2'(L_1) = 0.
\end{align}
Now, by uniqueness theorem for second order differential equations,
the solution on the second edge is fully determined by its value at
$L_1$ and the value of its derivative.  Thus the solution is still
$f_2(x) = C\cos(kx)$ and there is no change in the solution happening
at $L_1$.  We could have considered the interval $[0,L_1+L_2]$ without
introducing the additional vertex at $L_1$.  This obviously
generalizes to the following rule: \emph{having a Neumann
  vertex of degree 2 is equivalent to having an uninterrupted edge}.

This rule is useful, for example, for when one wants to program a
looping edge but is troubled by the notational difficulties of loops
or multiple edges.  In this case a looping edge can be implemented as
a triangle with two ``dummy'' vertices of degree two.

%%%%%%%%%
\subsection{Example: star graph with Neumann endpoints}

\begin{figure}
  \centerline{\includegraphics{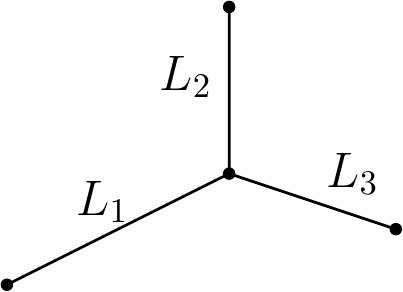}}
  \caption{A star graph with three edges.}
  \label{fig:stargraph}
\end{figure}

Consider now a first non-trivial example: a star graph with 3 edges
meeting at a central vertex, see Fig.~\ref{fig:stargraph}.  Parametrizing the edges
from the endpoints towards the central vertex, we get
\begin{align}
  \label{eq:star_equation}
  &-f_1'' = k^2 f_1, \quad -f_2'' = k^2 f_2, \quad -f_3'' = k^2 f_3,\\
  \label{eq:neumann_per}
  &f_1'(0) = 0, \quad f_2'(0)=0, \quad f_3'(0) = 0, \\
  \label{eq:cont_central}
  &f_1(L_1) = f_2(L_2) = f_3(L_3), \\
  \label{eq:cur_cons_central}
  &-f_1'(L_1) - f_2'(L_2) - f_3'(L_3) = 0,
\end{align}
where in addition to the already familiar equations
\eqref{eq:star_equation} and \eqref{eq:neumann_per} (in three copies),
we have continuity condition at the central vertex in equation
\eqref{eq:cont_central} and current conservation at the central vertex
in equation \eqref{eq:cur_cons_central}.  Note that in equation
\eqref{eq:star_equation}, the eigenvalue $k^2$ is the same on all
three edges.

Equations \eqref{eq:star_equation}-\eqref{eq:neumann_per} are solved
by
\begin{equation}
  \label{eq:soln_legs}
  f_1(x) = A_1\cos(kx), \quad   
  f_2(x) = A_2\cos(kx), \quad 
  f_3(x) = A_3\cos(kx),
\end{equation}
for some constants $A_1$, $A_2$ and $A_3$.  Now the remaining two
equations become, after a minor simplification,
\begin{align}
  \label{eq:cont_central_sub}
  &A_1\cos(kL_1) = A_2\cos(kL_2) = A_3\cos(kL_3), \\
  \label{eq:cur_cons_central_sub}
  &A_1\sin(kL_1) + A_2\sin(kL_2) + A_3\sin(kL_3) = 0.
\end{align}

Dividing equation \eqref{eq:cur_cons_central_sub} by
\eqref{eq:cont_central_sub} cancels the unknown constants, resulting
in
\begin{equation}
  \label{eq:3star}
  \tan(kL_1) + \tan(kL_2) + \tan(kL_3) = 0.
\end{equation}
Squares of the roots $k$ of this equation (which cannot be solved explicitly except
when all $L$s are equal) are the eigenvalues of the star graph.

\begin{exercise}
  \label{hw:lost_roots}
  We ignored the possibility that one or more of the cosines in
  equation \eqref{eq:cont_central_sub} are zero.  Show that the
  more robust (but much longer!) version of equation \eqref{eq:3star}
  is
  \begin{multline}
    \label{eq:3star_robust}
    \sin(kL_1)\cos(kL_2)\cos(kL_3) + \cos(kL_1)\sin(kL_2)\cos(kL_3)\\
    + \cos(kL_1)\cos(kL_2)\sin(kL_3) = 0.
  \end{multline}
  Moreover, the order of the root $k$ of \eqref{eq:3star_robust} is
  equal to the dimension of the corresponding eigenspace.

  For example, if $L_1=L_2=L_3=\pi/2$, the left-hand side of
  \eqref{eq:3star_robust} vanishes at $k=1$ to the second order.  This
  corresponds to two linearly independent solutions,
  \begin{equation}
    \label{eq:efun1}
    (f_1, f_2, f_3) = (\cos(x), -\cos(x), 0) 
   \quad\mbox{and}\quad
   (\cos(x), 0, -\cos(x)). 
  \end{equation}
\end{exercise}

There is actually a lot more that can be (and will be said) about this
simple graph, but we first need to extend the set of possible vertex
conditions that we consider.

%%%%%%%%%%%%%%%%%%%%%%%%%%%%%%%%%%%%%%%%%%%%%%%
\section{Dirichlet condition}

Another possible vertex condition which is compatible with
self-adjointness of the Schr\"odinger operator is the so-called
Dirichlet condition,
\begin{equation}
  \label{eq:Dir_cond}
  f_e(v) = 0 \quad \mbox{for all $e$ incident to $v$}.
\end{equation}
It is usually used only at vertices of degree 1 for the following
reason.  A Dirichlet condition imposed at a vertex of degree 2 or more
fails to relate in any way the individual functions living on the
incident edges.  Thus a graph with a Dirichlet condition at a vertex
$v$ of degree $d_v$ is equivalent to a graph with $v$ substituted with
$d_v$ vertices of degree one, see Fig.~\ref{fig:dirichlet_splits}.

\begin{figure}
  \centering
  \includegraphics{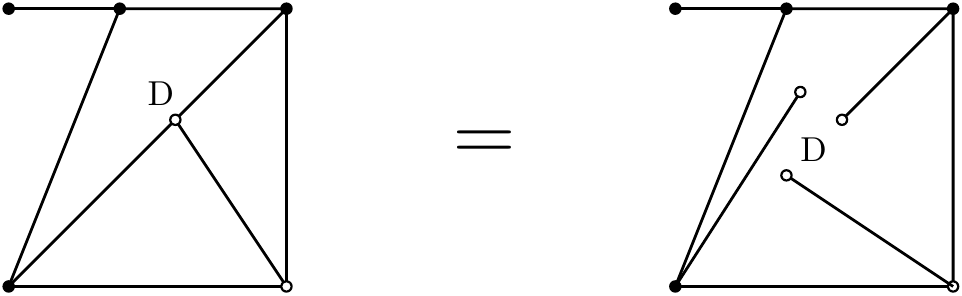}
  \caption{A Dirichlet condition imposed at a vertex of degree $d=3$
    (here and in subsequent figures, the Dirichlet vertices are
    denoted by empty circles) is equivalent to splitting the vertex
    into three and imposing the condition at every new vertex.}
  \label{fig:dirichlet_splits}
\end{figure}

Note that the difference between a Neumann and a Dirichlet
condition at a vertex of degree $d_v$ is minimal: the current
conservation condition is substituted with the condition that
\emph{one} of the function values is equal to zero; the rest is taken
care of by continuity.

\begin{exercise}
  \label{hw:Dir_interval}
  Show that the eigenvalues of the interval $[0,L]$ with Dirichlet
  conditions at both ends are given by $\lambda_n = \left(\pi n /
    L\right)^2$, $n=1, 2, \ldots$ with the eigenfunctions $f^{(n)}(x)
  = \sin\left(\pi n x /L\right)$.
\end{exercise}

%%%%%%%%%
\subsection{Example: a star graph with Dirichlet conditions at
  endpoints}
\label{sec:Dstar}

Consider a star graph with $N$ edges.  We parametrize the edges from
the endpoints towards the central vertex, as before.  Solving the
eigenvalue equation $-f'' = k^2f$ and imposing the Dirichlet
condition at $x=0$ results in $f_i(x) = A_i \sin(kx)$, where the
constant $A_i$ depends on the edge.

At the central vertex we have
\begin{align}
  \label{eq:cont_central_subD}
  &A_1\sin(kL_1) = \ldots = A_N\sin(kL_N), \\
  \label{eq:cur_cons_central_subD}
  &A_1\cos(kL_1) + \ldots + A_N\cos(kL_N) = 0.
\end{align}
If we assume that the lengths $L_i$ are incommensurate, we will not
be missing any roots by dividing equation
\eqref{eq:cur_cons_central_subD} by \eqref{eq:cont_central_subD},
leading to the eigenvalue condition
\begin{equation}
  \label{eq:Dirichlet_star}
  \sum_{i=1}^N \cot(kL_i) = 0.
\end{equation}
This condition is very similar to equation~\eqref{eq:3star} we derived
for the star graph with Neumann conditions at the endpoints.  However,
it is now easier to see a connection between the star graph and the
eigenvalue problem of an interval.

\begin{figure}
  \centering
  \includegraphics[scale=1]{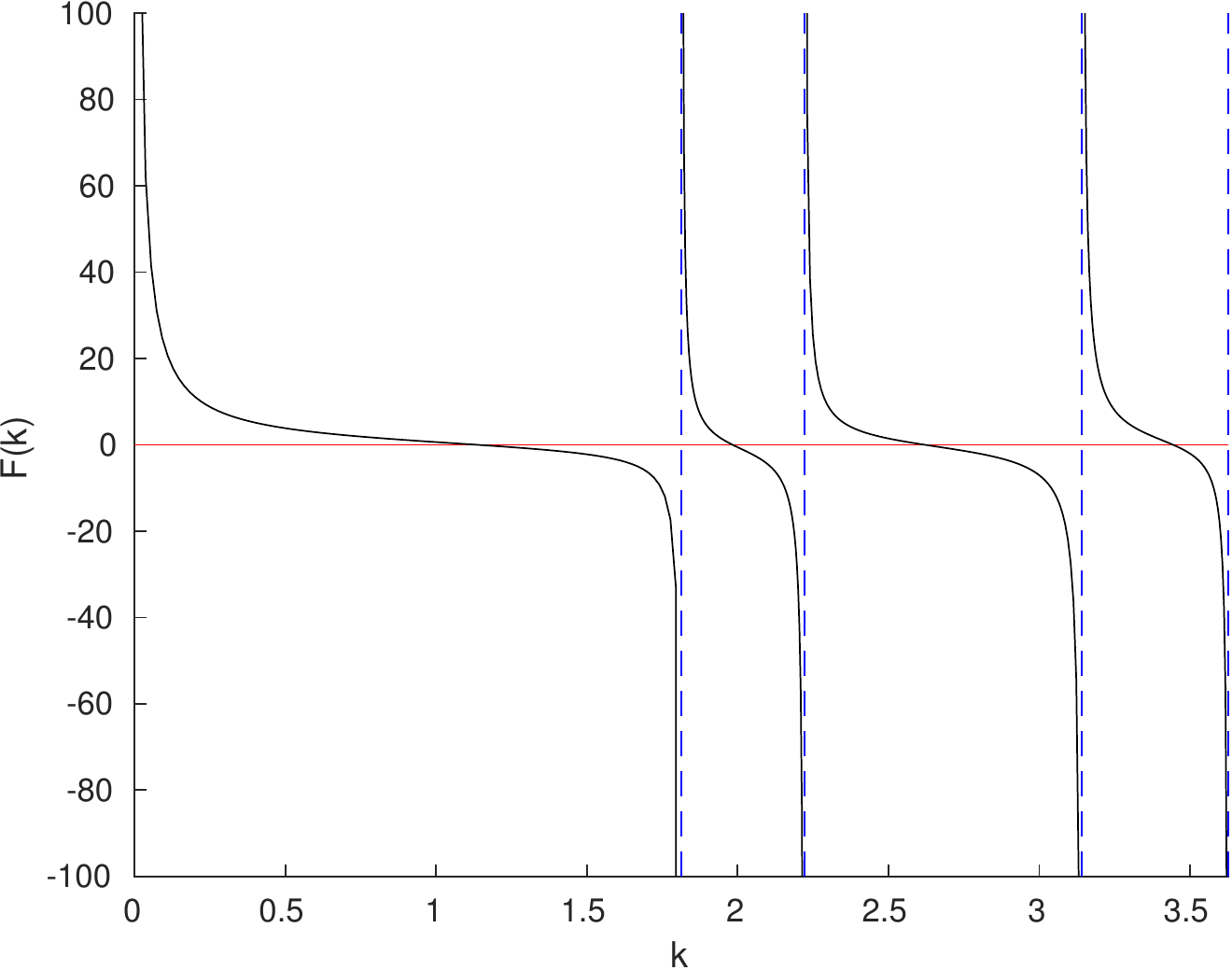}
  \caption{Plot of the right-hand side of (\ref{eq:Dirichlet_star})
    with $3$ edges of lengths $1$, $\sqrt{2}$ and $\sqrt{3}$.  It is a
  function that decreases monotonely between each pair of its poles.
  The poles visible on the plot are $0$, $\pi/\sqrt{3}$,
  $\pi/\sqrt{2}$, $\pi$ and $2\pi/\sqrt{3}$.}
  \label{fig:tan_poles}
\end{figure}

The left-hand side of equation (\ref{eq:Dirichlet_star}) has
derivative of constant sign (negative) except at the poles
$k \in \{n\pi/L_i\}$, $i=1,\ldots,N$, $n\in\Z$. Therefore, between
each pair of consecutive poles there is a single root of equation
(\ref{eq:Dirichlet_star}), see Fig.~\ref{fig:tan_poles}.  The poles
can be interpreted as the square roots of the eigenvalues of the
individual edges of the graph, see Exercise~\ref{hw:Dir_interval}.
Furthermore, the collection of the edges with Dirichlet conditions can
be obtained from the original star graph by changing the central
vertex condition from Neumann to Dirichlet, see
Fig.~\ref{fig:starNtoD}.  As we mentioned already, this can be
effected by changing only one equation in the Neumann conditions,
which is a rank-one perturbation.  To summarize, we found that there
is exactly one eigenvalue of a star graph between any two consecutive
eigenvalues of its rank 1 perturbation. 

\begin{figure}
  \centering
  \includegraphics{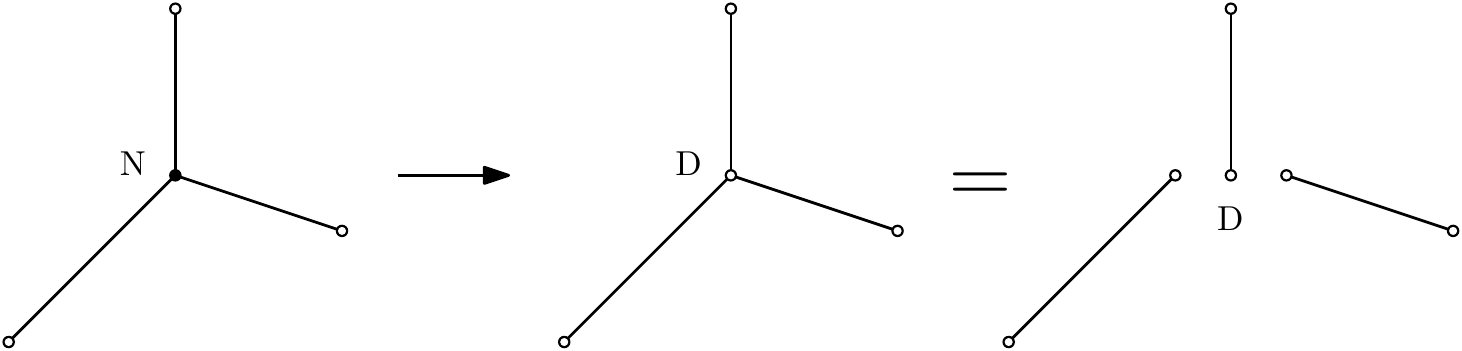}
  \caption{A star graph with the central vertex condition changed to
    Dirichlet splits into a collection of intervals.}
  \label{fig:starNtoD}
\end{figure}

%%%%%%%%%%%%%%%%%%%%%%%%%%%%%%%%%%%%%%%%%%%%%%%%%%
\section{Interlacing inequalities}
\label{sec:interlacing}

Naturally, the observation of Section~\ref{sec:Dstar} applies not only
to star graphs but to any graphs with discrete spectrum.

\begin{lemma}[Neumann--Dirichlet interlacing]
  \label{lem:interlace_ND}
  Let $\Gamma_0$ be a quantum graph with a vertex $v$ which is endowed
  with Neumann conditions.  Let $\Gamma_\infty$ denote the
  graph obtained by changing the conditions at $v$ to Dirichlet.
  Numbering the eigenvalues of both graphs in ascending order starting
  from 1, we have
  \begin{equation}
    \label{eq:interlace_ND}
    \lambda_n(\Gamma_0) \leq \lambda_n(\Gamma_\infty) \leq
    \lambda_{n+1}(\Gamma_0) \leq \lambda_{n+1}(\Gamma_\infty).
  \end{equation}
  An equality between a Neumann and a Dirichlet eigenvalue
  is possible only if the eigenspace of the Neumann
  eigenvalue contains a function vanishing at $v$, or, equivalently,
  the eigenspace of the Dirichlet eigenvalue contains a function
  satisfying the current conservation condition.
\end{lemma}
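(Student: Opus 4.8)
The plan is to get the four inequalities from the variational (min--max) characterisation of the eigenvalues, and to handle the equality case with Green's identity.

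Both operators come from the same quadratic form $q[f]=\sum_{e\in E}\int_e\abs{f'(x)}^2\,dx$; only the form domain changes. For $\Gamma_0$ it is $\mathcal H_0=\{f:\ f\in H^1(e)\ \text{for every}\ e,\ f\ \text{continuous at every vertex}\}$, and for $\Gamma_\infty$ it is $\mathcal H_\infty=\{f\in\mathcal H_0:\ f(v)=0\}$, a subspace of $\mathcal H_0$ of codimension one on which $q$ is unchanged. Since the spectrum is discrete, $\lambda_n(\Gamma_\bullet)=\min_{\dim S=n}\ \max_{0\ne f\in S}\,q[f]/\norm{f}^2$, the minimum over $n$-dimensional subspaces of the relevant form domain. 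The inequality $\lambda_n(\Gamma_0)\le\lambda_n(\Gamma_\infty)$ is then immediate, since any competitor $S\subseteq\mathcal H_\infty$ for $\Gamma_\infty$ is also a competitor for $\Gamma_0$ with the same Rayleigh quotients; the last inequality $\lambda_{n+1}(\Gamma_0)\le\lambda_{n+1}(\Gamma_\infty)$ is the same with $n\mapsto n+1$. For the middle one, take an $(n+1)$-dimensional $S\subseteq\mathcal H_0$ realising $\lambda_{n+1}(\Gamma_0)$; its intersection with $\mathcal H_\infty$ has dimension at least $n$, and using an $n$-dimensional subspace of that intersection as a competitor for $\Gamma_\infty$ gives $\lambda_n(\Gamma_\infty)\le\max_{0\ne f\in S}q[f]/\norm{f}^2=\lambda_{n+1}(\Gamma_0)$. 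This part is routine.

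For the equality case, first note the two phrasings are equivalent: a $\lambda$-eigenfunction $f$ of $\Gamma_0$ with $f(v)=0$ automatically satisfies the Dirichlet condition at $v$, hence is a $\lambda$-eigenfunction of $\Gamma_\infty$, and being a Neumann eigenfunction it satisfies current conservation at $v$; the converse is identical. An equality in \eqref{eq:interlace_ND} occurs precisely when some $\lambda$ is an eigenvalue of both graphs, so it suffices to show that then the space $W$ of common eigenfunctions is nonzero. Introduce the finite-dimensional space $\mathcal S_\lambda$ of functions solving $-f''=\lambda f$ on every edge, continuous at every vertex, and satisfying current conservation at every vertex \emph{other than} $v$. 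With $\ell(f):=\sum_{e\sim v}\frac{df}{dx}(v)$ (derivatives directed into the edges), one checks directly that the $\lambda$-eigenspace of $\Gamma_0$ is $\{f\in\mathcal S_\lambda:\ell(f)=0\}$, the $\lambda$-eigenspace of $\Gamma_\infty$ is $\{f\in\mathcal S_\lambda:f(v)=0\}$, and $W$ is their intersection, i.e.\ the kernel of the linear map $T:\mathcal S_\lambda\to\C^2$, $T(f)=(f(v),\ell(f))$.

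The crux is that $T$ has rank at most one. Applying Green's identity on every edge and summing (the integration by parts of Exercise~\ref{hw:nonneg}), for $f,g\in\mathcal S_\lambda$ all vertex boundary terms cancel except at $v$, and since $f$ and $g$ solve the same equation with \emph{real} $\lambda$ the total vanishes, yielding $\cc{g(v)}\,\ell(f)=f(v)\,\cc{\ell(g)}$ for all $f,g\in\mathcal S_\lambda$; a short computation (first $f=g$, then comparing different $f$) shows the image of $T$ lies on a single complex line, so $\dim\mathcal S_\lambda-\dim W=\operatorname{rank}T\le1$. Now if $\lambda$ is an eigenvalue of both graphs, both eigenspaces above are nonzero. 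If $f\mapsto f(v)$ vanishes identically on $\mathcal S_\lambda$, the $\Gamma_0$-eigenspace is contained in the $\Gamma_\infty$-eigenspace, so $W$ equals the $\Gamma_0$-eigenspace and is nonzero; symmetrically if $\ell$ vanishes identically. Otherwise $\operatorname{rank}T=1$, hence $\dim W=\dim\mathcal S_\lambda-1$, while the $\Gamma_0$-eigenspace is a nonzero hyperplane of $\mathcal S_\lambda$ containing $W$, forcing $\dim\mathcal S_\lambda\ge 2$ and $W\ne\{0\}$.

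The inequalities are standard min--max bookkeeping; the real content is the equality statement, and within it the rank-one bound for $T$ — this is exactly where self-adjointness (the realness of $\lambda$, i.e.\ the vanishing Green's-identity term at $v$) enters, and it is what makes a shared eigenfunction unavoidable when a Neumann and a Dirichlet eigenvalue coincide. A softer variant disposes of the sub-case $\lambda_n(\Gamma_0)=\lambda_n(\Gamma_\infty)$ quickly: the span $S$ of the first $n$ eigenfunctions of $\Gamma_\infty$ lies in $\mathcal H_\infty$ and also realises the $\Gamma_0$ minimum, hence must contain a $\lambda_n(\Gamma_0)$-eigenfunction of $\Gamma_0$, which, lying in $S$, vanishes at $v$; extending this to $\lambda_n(\Gamma_\infty)=\lambda_{n+1}(\Gamma_0)$ is more delicate, which is why I would run the uniform Green's-identity argument instead.
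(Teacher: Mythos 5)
The paper itself contains no proof of this lemma: it defers to \cite{BerKuc_incol12,BerKuc_graphs} and only remarks that the argument is the standard min--max one, analogous to Cauchy interlacing under a rank-one perturbation. Your proof of the four inequalities is exactly that standard route --- the key observation that the Dirichlet form domain $\mathcal H_\infty$ is a codimension-one subspace of $\mathcal H_0$ on which the quadratic form is unchanged --- and it is correct (the only cosmetic omission is the potential $V$, which should be carried in the form $q[f]=\int\abs{f'}^2+V\abs{f}^2$ but changes nothing). Your treatment of the equality case is also sound and is the right level of care: the Green's-identity relation $\cc{g(v)}\,\ell(f)=f(v)\,\cc{\ell(g)}$ on $\mathcal S_\lambda$ does force $\operatorname{rank}T\le 1$ (if $T$ were onto $\C^2$ one could take $T(f)=(1,0)$ and $T(g)=(0,1)$ and the relation would give $0=1$), and the ensuing dimension count produces a nonzero common eigenfunction, which simultaneously verifies both phrasings of the equality condition. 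So the proposal is a correct, self-contained version of the proof the paper points to.
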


The proof, which can be found in \cite{BerKuc_incol12,BerKuc_graphs},
uses the standard arguments built upon the minimax characterisation of
eigenvalues of a self-adjoint operator.  It is analogous to the proofs
of Cauchy's Interlacing Theorem or rank-one perturbations for
matrices, see, for example \cite{HornJohnson}.

The following exercise contains an application of
Lemma~\ref{lem:interlace_ND} to the nodal count of eigenfunctions.

\begin{exercise}
  \label{hw:zeros_count}
  Show that if the $n$-th eigenvalue of a star graph with Dirichlet
  endpoints is simple and the corresponding eigenfunction is
  non-vanishing at the central vertex, the eigenfunction has precisely
  $n-1$ zeros in the interior of the graph.

  This statement can be obtained by combining the strict version of
  inequality (\ref{eq:interlace_ND}) with the following observation.
  If $\lambda$ satisfies $\lambda_n(I) \leq \lambda \leq
  \lambda_{n+1}(I)$, where $I$ is the interval $[0,L]$ with Dirichlet
  boundary conditions, then $f(x) = \sin(\sqrt\lambda x)$ has $n$
  zeros on the interval $(0,L)$.
\end{exercise}

A more general version of this statement holds for tree graphs.  This
theorem has a rich history, originally appearing in
\cite{PokPryObe_mz96} and \cite{Sch_wrcm06}; the shortest proof along
the lines outlined in Exercise~\ref{hw:zeros_count} appeared in
\cite{BerKuc_incol12} and in Section 5.2.2 of \cite{BerKuc_graphs}.

\begin{figure}
  \centering
  \includegraphics{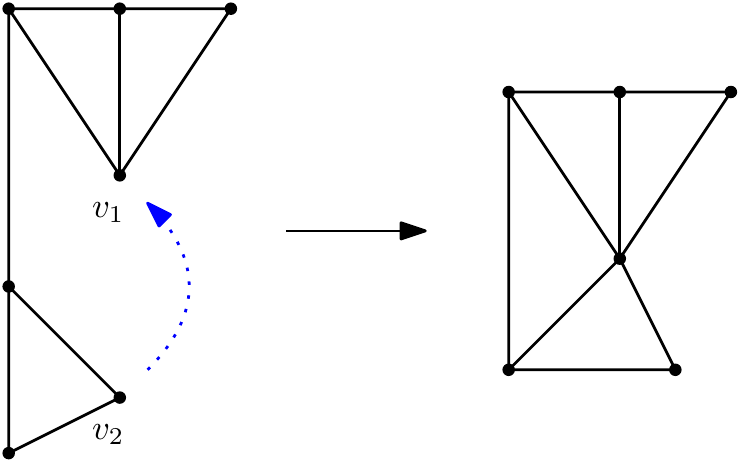}
  \caption{The operation of merging two vertices into one.}
  \label{fig:gluing}
\end{figure}

Another useful interlacing inequality arises when we join two
Neumann vertices to form a single Neumann vertex, see
Fig.~\ref{fig:gluing}.  Since the change in the vertex conditions can
be described as imposing \emph{another} continuity
equality\footnote{This is not entirely correct, as the two current
  conservations conditions are also relaxed into one.  However, in
  terms of quadratic forms, which impose the current conservation
  automatically, the change is indeed a one-dimensional reduction of
  the domain of the form.}, we expect the eigenvalues to increase as a
result.

\begin{lemma}
  \label{lem:interlace_gluing}
  Let $\Gamma$ be a quantum graph (not necessarily connected) with two
  vertices $v_1$ and $v_2$ with Neumann conditions.  Modify
  the graph by merging the two vertices into one, to obtain the graph
  $\Gamma'$. Then
  \begin{equation}
    \label{eq:interlace_gluing}
    \lambda_n(\Gamma) \leq \lambda_n(\Gamma') \leq \lambda_{n+1}(\Gamma).
  \end{equation}
  An equality between an eigenvalue of $\Gamma$ and an eigenvalue of
  $\Gamma'$ is only possible if the eigenspace of $\Gamma$ contains an
  eigenfunction whose values at $v_1$ and $v_2$ are equal or,
  equivalently, the eigenspace of $\Gamma'$ contains an eigenfunction
  which additionally satisfy the current conservation condition with
  respect to the subset $E_{v_1}$ of the edges incident to vertex
  $v_1$ in the graph $\Gamma$.
\end{lemma}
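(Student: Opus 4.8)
The plan is to realize the merging operation as a one‑dimensional restriction of the quadratic form domain and then invoke the standard minimax interlacing argument, exactly as in the Neumann--Dirichlet case of Lemma~\ref{lem:interlace_ND}. Recall that the eigenvalues of the Schr\"odinger operator on a compact quantum graph with Neumann conditions are given by the Rayleigh quotient $R[f] = \big(\int_\Gamma |f'|^2 + V|f|^2\big)\big/\int_\Gamma |f|^2$ minimized over the form domain, which consists of functions that are $H^1$ on each edge and \emph{continuous} at each vertex; the current conservation condition is natural and does not need to be imposed on the form domain. Let $H(\Gamma)$ denote this form domain for $\Gamma$ and $H(\Gamma')$ the one for $\Gamma'$. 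Since merging $v_1$ and $v_2$ into one vertex adds the single linear constraint $f(v_1) = f(v_2)$ to the continuity requirements (the current conservation being automatic in the form setting, per the footnote), we have $H(\Gamma') \subset H(\Gamma)$ and this is a codimension‑one subspace: $H(\Gamma) = H(\Gamma') \oplus \Span\{g\}$ for any fixed $g \in H(\Gamma)$ with $g(v_1) \ne g(v_2)$.

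First I would set up the two minimax formulas. By the Courant--Fischer characterization,
\begin{equation*}
  \lambda_n(\Gamma) = \min_{\substack{S \subset H(\Gamma)\\ \dim S = n}} \max_{0 \ne f \in S} R[f],
  \qquad
  \lambda_n(\Gamma') = \min_{\substack{S \subset H(\Gamma')\\ \dim S = n}} \max_{0 \ne f \in S} R[f].
\end{equation*}
Since every $n$‑dimensional subspace of $H(\Gamma')$ is also an $n$‑dimensional subspace of $H(\Gamma)$, the minimum defining $\lambda_n(\Gamma')$ is taken over a smaller family, giving immediately the lower bound $\lambda_n(\Gamma) \le \lambda_n(\Gamma')$. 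For the upper bound $\lambda_n(\Gamma') \le \lambda_{n+1}(\Gamma)$, I would take an optimal $(n+1)$‑dimensional subspace $S \subset H(\Gamma)$ realizing $\lambda_{n+1}(\Gamma)$ and intersect it with the codimension‑one subspace $H(\Gamma')$: the intersection $S \cap H(\Gamma')$ has dimension at least $n$, so it contains an $n$‑dimensional subspace $S'$ of $H(\Gamma')$, and $\lambda_n(\Gamma') \le \max_{0 \ne f \in S'} R[f] \le \max_{0 \ne f \in S} R[f] = \lambda_{n+1}(\Gamma)$. This establishes \eqref{eq:interlace_gluing}.

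For the equality statement I would trace through when the inequalities above can be tight. Equality $\lambda_n(\Gamma) = \lambda_n(\Gamma')$ forces the optimal $n$‑dimensional subspace for $\lambda_n(\Gamma')$ to also be optimal for $\lambda_n(\Gamma)$ in $H(\Gamma)$; standard minimax theory then gives that this subspace is spanned by eigenfunctions of $\Gamma$ at eigenvalue $\lambda_n(\Gamma)$, so in particular there is an eigenfunction of $\Gamma$ at that eigenvalue lying in $H(\Gamma')$, i.e.\ with $f(v_1) = f(v_2)$. Conversely, such an eigenfunction is automatically in the form domain of $\Gamma'$ and satisfies the eigenvalue equation on every edge; it remains to check the vertex conditions at the merged vertex, which is where the current conservation enters. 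Since $f$ was a Neumann eigenfunction of $\Gamma$, it satisfies current conservation separately at $v_1$ and at $v_2$; summing these two conditions gives exactly the current conservation condition at the merged vertex of $\Gamma'$, so $f$ is a genuine eigenfunction of $\Gamma'$ at the same eigenvalue. The equivalent reformulation in terms of $\Gamma'$‑eigenfunctions satisfying the partial current‑conservation identity over $E_{v_1}$ follows by the same bookkeeping run in reverse: an eigenfunction of $\Gamma'$ that balances currents over $E_{v_1}$ alone must, by subtraction from the full balance at the merged vertex, also balance currents over $E_{v_2}$, hence descends to an eigenfunction of $\Gamma$.

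The argument is essentially routine once the form‑domain picture is in place; the one point demanding care — and the main obstacle — is the footnote's caveat that merging is \emph{not} literally ``imposing one more continuity equation'' at the operator level, since two current‑conservation conditions also get replaced by one. The clean resolution, which I would emphasize, is to work entirely with the quadratic form, where current conservation is a natural (not essential) boundary condition: at the form level the \emph{only} change is the single added constraint $f(v_1) = f(v_2)$, so $H(\Gamma') \subset H(\Gamma)$ really is codimension one and the matrix‑style interlacing goes through verbatim. One should also note in passing the hypothesis that $\Gamma$ (and hence $\Gamma'$) be compact so that the spectrum is discrete and the minimax formulas apply, as guaranteed by Thm~3.1.1 of \cite{BerKuc_graphs}.
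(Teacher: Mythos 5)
The paper does not actually prove this lemma: it only observes (in the footnote preceding the statement) that the merging is a one-dimensional restriction of the quadratic-form domain and defers the standard minimax argument to the cited references, and your proposal is a correct and complete write-up of exactly that intended argument. The one imprecision is the claim that an optimal $n$-dimensional subspace must be \emph{spanned} by eigenfunctions at $\lambda_n(\Gamma)$ — it need only \emph{contain} such an eigenfunction (obtained, e.g., by intersecting the span of the first $n$ eigenfunctions of $\Gamma'$ with the orthogonal complement of the first $n-1$ eigenfunctions of $\Gamma$) — but the conclusion you draw from it is the correct standard one.
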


This lemma and Lemma~\ref{lem:interlace_ND} have many applications to
counting zeros of a graph's eigenfunctions, one of which will be
presented in Section~\ref{sec:nodal_dihedral}.  Another application is
to the eigenvalue counting which is the subject of the next section.

%%%%%%%%%
\subsection{An application to eigenvalue counting: Weyl's law}

Let us define the eigenvalue \term{counting function} $N_\Gamma(k)$ as the
number of eigenvalues of the graph $\Gamma$ which are smaller than $k^2$,
\begin{equation}
  \label{eq:counting_fn}
  N_\Gamma(k) = \#\left\{ \lambda \in \sigma(\Gamma) 
    : \lambda \leq k^2\right\}.
\end{equation}
This number is guaranteed to be finite since the spectrum of a quantum
graph is discrete and bounded from below (Sec 3.1.1 and Thm 1.4.19 of
\cite{BerKuc_graphs}).  We count the eigenvalues in terms of $k = \sqrt{\lambda}$ as
this is more convenient and can be easily related back to $\lambda$.

The counting function $N_\Gamma(k)$ grows linearly in $k$, with the
slope proportional to the ``size'' of the graph.  This type of result
is known as the ``Weyl's Law''.

\begin{lemma}
  \label{lem:weyl_law}
  Let $\Gamma$ be a graph with Neumann or Dirichlet
  conditions at every vertex.  Then
  \begin{equation}
    \label{eq:weyl_law}
    N(k) = \frac{\mathcal{L}}{\pi} k + O(1),
  \end{equation}
  where $\mathcal{L} = L_1 + \ldots + L_{|E|}$ is the total length of
  the graph's edges and the remainder term is bounded above and below
  by constants independent of $k$.
\end{lemma}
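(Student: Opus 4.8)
The plan is to prove the Weyl law by reducing the general graph to a disjoint union of intervals via the interlacing lemmas already established, then invoking the explicit eigenvalue formula for a single interval. First I would observe that a single edge of length $L$ with Dirichlet conditions at both endpoints has eigenvalues $(\pi n/L)^2$, $n=1,2,\ldots$ (Exercise~\ref{hw:Dir_interval}), so its counting function is $N_{[0,L]}(k) = \lfloor Lk/\pi\rfloor = \frac{L}{\pi}k + O(1)$; the same holds for Neumann conditions on the interval, which add a single eigenvalue at $0$. Summing over all $|E|$ edges, the graph $\Gamma_D$ consisting of the disconnected union of all edges of $\Gamma$, each with Dirichlet conditions at both ends, satisfies $N_{\Gamma_D}(k) = \frac{\mathcal{L}}{\pi}k + O(1)$, where the $O(1)$ term is bounded by $|E|$ in absolute value.

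Next I would connect $\Gamma$ to $\Gamma_D$ by a finite sequence of elementary moves, each covered by Lemma~\ref{lem:interlace_ND} or Lemma~\ref{lem:interlace_gluing}. Starting from $\Gamma$, at each vertex $v$ of $\Gamma$ first change the condition to Dirichlet (if it is Neumann; if it is already Dirichlet, do nothing) — by Lemma~\ref{lem:interlace_ND} this is a rank-one perturbation, so $\lambda_n$ shifts by at most one index: $\lambda_n(\Gamma) \le \lambda_n(\Gamma_\infty) \le \lambda_{n+1}(\Gamma)$. A Dirichlet vertex of degree $d_v$ then decouples into $d_v$ separate Dirichlet vertices of degree one, as explained in the text around Fig.~\ref{fig:dirichlet_splits}, with no change to the spectrum. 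Performing this at all $|V|$ vertices turns $\Gamma$ into exactly $\Gamma_D$. Tracking the counting functions, each rank-one Dirichlet perturbation changes $N(k)$ by at most $1$ for every $k$ (this is the counting-function form of an interlacing inequality: if $\lambda_n(\Gamma) \le \lambda_n(\Gamma_\infty) \le \lambda_{n+1}(\Gamma)$ for all $n$, then $|N_{\Gamma}(k) - N_{\Gamma_\infty}(k)| \le 1$). Since at most $|V|$ such perturbations are performed, we conclude
\begin{equation}
  \bigl| N_\Gamma(k) - N_{\Gamma_D}(k) \bigr| \le |V|
\end{equation}
for all $k$, and combining with the computation of $N_{\Gamma_D}$ gives \eqref{eq:weyl_law} with an explicit bound $|O(1)| \le |V| + |E|$ on the remainder.

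The one point requiring a little care — and the main obstacle — is the bookkeeping that converts the eigenvalue interlacing \eqref{eq:interlace_ND} into the uniform bound $|N_\Gamma(k) - N_{\Gamma_\infty}(k)| \le 1$ on counting functions, including the correct treatment of multiplicities (eigenvalues should be listed with multiplicity throughout, and $N(k)$ counts them with multiplicity). This is a standard monotone-rearrangement argument, but it is worth stating cleanly: if two nondecreasing sequences $\{a_n\}$ and $\{b_n\}$ satisfy $a_n \le b_n \le a_{n+1}$ for all $n$, then for every real $t$ the cardinalities $\#\{n : a_n \le t\}$ and $\#\{n : b_n \le t\}$ differ by at most one. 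A second minor point is that the argument as stated requires the spectrum to be discrete and bounded below, so that the counting function is finite — this is guaranteed for compact graphs by the results cited in the excerpt (Sec.~3.1.1 and Thm.~1.4.19 of \cite{BerKuc_graphs}). No use of Lemma~\ref{lem:interlace_gluing} is actually needed for this route, though one could alternatively build $\Gamma$ up from $\Gamma_D$ by gluing, which would give the same bound through \eqref{eq:interlace_gluing}.
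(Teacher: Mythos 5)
Your proposal is correct and follows essentially the same route as the paper's own proof: change every vertex condition to Dirichlet via repeated application of Lemma~\ref{lem:interlace_ND} (costing at most $|V|$ in the counting function), then sum the explicit interval counting functions $\lfloor kL_e/\pi\rfloor$ over the $|E|$ edges of the resulting disjoint union. The only cosmetic difference is that the paper keeps the two one-sided bounds separate, obtaining $-|E| \leq N_\Gamma(k) - \frac{\mathcal{L}}{\pi}k \leq |V|$ rather than your symmetric bound $|V|+|E|$.
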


\begin{proof}
  Let us first consider an interval of length $L$ with Dirichlet
  conditions.  We know the eigenvalues are $\left(\pi n / L\right)^2$,
  $n \in \N$, therefore we can express $N_L(k)$ using the integer part
  function,
  \begin{equation*}
    N_L(k) = \left\lfloor \frac{kL}{\pi} \right\rfloor,
  \end{equation*}
  and thus bound it,
  \begin{equation}
    \label{eq:counting_interval}
    \frac{L}{\pi}k - 1 \leq N_L(k) \leq \frac{L}{\pi}k.
  \end{equation}

  Let us now consider the setting of Lemma~\ref{lem:interlace_ND}:
  $\Gamma_0$ is a quantum graph with a vertex $v$ which is endowed
  with Neumann conditions and $\Gamma_\infty$ is the
  graph obtained by changing the conditions at $v$ to Dirichlet.
  Inequality \eqref{eq:interlace_ND} can be rewritten as 
  \begin{equation}
    \label{eq:interlacing_Weyl}
    N_{\Gamma_\infty}(k) \leq N_{\Gamma_0}(k) \leq
    N_{\Gamma_\infty}(k) + 1.
  \end{equation}

  Starting with a graph $\Gamma$, we can change conditions at every
  vertex to Dirichlet.  Applying the interlacing inequality $|V|$
  times (or less, if some vertices are already Dirichlet), we get 
  \begin{equation}
    \label{eq:interlacing_bigD}
    N_{\Gamma_D}(k) \leq N_{\Gamma}(k) \leq
    N_{\Gamma_D}(k) + |V|,
  \end{equation}
  where by $\Gamma_D$ we denote the graph with every vertex conditions
  changed to Dirichlet.  The graph $\Gamma_D$ is just a collection of
  disjoint intervals, each with Dirichlet conditions at the
  endpoints.  The eigenvalue spectrum of $\Gamma_D$ is the union (in
  the sense of multisets) of the spectra of the intervals; the
  counting function is the sum of the interval counting functions.  By
  adding $|E|$ inequalities of type \eqref{eq:counting_interval}, we
  get
  \begin{equation}
    \label{eq:counting_bigD}
    \frac{L_1+\ldots+L_{|E|}}{\pi}k - |E| \leq N_{\Gamma_D}(k) 
    \leq \frac{L_1+\ldots+L_{|E|}}{\pi}k,
  \end{equation}
  leading to the final estimate
  \begin{equation}
    \label{eq:Weyl_result}
    \frac{\mathcal{L}}{\pi}k - |E| \leq N_\Gamma(k) \leq
    \frac{\mathcal{L}}{\pi}k + |V|.
  \end{equation}
\end{proof}

\begin{remark}
  \label{rem:weyl_bound}
  The bounds on the remainder term in the Weyl's law for a graph
  obtained in the proof are of order $|E|$.  However, numerically
  it appears that the counting function follows the Weyl's term much
  more closely.  Getting the optimal bound remains an open question at
  the time of writing.
\end{remark}

%%%%%%%%%%%%%%%%%%%%%%%%%%%%%%%%%%%%%%%%%%%%%%%%%
\section{Secular determinant}

We will now describe another procedure for deriving an equation for
the eigenvalues of a quantum graph.  Before we describe the general
case, we shall tackle a simple but useful example.

%%%%%%%%%
\subsection{Example: lasso (lollipop) graph}

\begin{figure}
  \centering
  \includegraphics{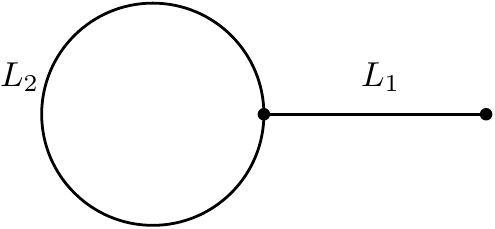}
  \caption{A lasso (or lollipop) graph, consisting of an edge and a loop.}
  \label{fig:lollipop}
\end{figure}

Consider the graph depicted in Fig.~\ref{fig:lollipop}, an edge attached to a loop.
We will impose Neumann conditions at both the attachment point and the
endpoint of the edge.

Let the edge be parametrized by $[0,L_1]$ with 0 corresponding to the
attachment point and the loop be parametrized by $[0,L_2]$.  The
solution of the eigenvalue equation $-f'' = k^2f$ on the edge can be
written as
\begin{equation}
  \label{eq:sol_edge}
  f_1(x) = a_1 e^{ikx} + a_{\bar{1}} e^{ik(L_1-x)},
\end{equation}
valid as long as $k\neq 0$ (we take care of this special case
separately).  The Neumann condition at the endpoint leads to
\begin{equation}
  \label{eq:endpoint_cond_appl}
  f_1'(L_1) = ik a_1 e^{ikL_1} - ik a_{\bar{1}} = 0,
\end{equation}
and therefore
\begin{equation}
  \label{eq:endpoint_cond}
  a_{\bar{1}} = a_1 e^{ikL_1}.
\end{equation}

The solution on the loop we express similarly as
\begin{equation}
  \label{eq:sol_loop}
  f_2(x) = a_2 e^{ikx} + a_{\bar{2}} e^{ik(L_2-x)},
\end{equation}
At the attachment point, the continuity condition reads
\begin{equation}
  \label{eq:attach_cont}
  a_1 + a_{\bar{1}} e^{ikL_1} = a_2 + a_{\bar{2}} e^{ikL_2} = a_{\bar{2}} + a_2 e^{ikL_2},
\end{equation}
while the current conservation is
\begin{equation}
  \label{eq:lasso_cur_cons}
  f_1'(0) + f_2'(0) - f_2'(L_2) = 0,
\end{equation}
which, after simplification, yields
\begin{equation}
  \label{eq:lasso_cur_cons_simpl}
  a_1 - a_{\bar{1}} e^{ik L_1} + a_2 - a_{\bar{2}} e^{ikL_2} +
  a_{\bar{2}} - a_2 e^{ikL_2} = 0.
\end{equation}
Rearranging equations \eqref{eq:attach_cont} and
\eqref{eq:lasso_cur_cons_simpl} we get the system
\begin{align}
  \label{eq:scat1}
  a_1 &= -\frac13 a_{\bar{1}} e^{ik L_1} 
  + \frac23 a_{\bar{2}} e^{ikL_2} + \frac23 a_2 e^{ikL_2}, \\
  \label{eq:scat2}
  a_2 &= \frac23 a_{\bar{1}} e^{ik L_1} 
  + \frac23 a_2 e^{ikL_2} - \frac13 a_{\bar{2}} e^{ikL_2}, \\
  \label{eq:scat3}
  a_{\bar{2}} &= \frac23 a_{\bar{1}} e^{ik L_1} 
  - \frac13 a_2 e^{ikL_2} + \frac23 a_{\bar{2}} e^{ikL_2}.
\end{align}
This system has an interesting ``dynamical'' interpretation, see
Fig.~\ref{fig:scattering}.  Take, for example, the coefficient
$a_{\bar{1}}$ and interpret it as the amplitude of the plain wave
leaving the endpoint vertex in the direction of the loop.  Traversing
the edge (of length $L_1$), it acquires the phase factor of
$e^{ikL_1}$.  Hitting the vertex, it scatters in three directions:
back into the edge with \term{back-scattering} amplitude $-1/3$
contributing to the right-hand side of equation~(\ref{eq:scat1}), and
forward into the two ends of the loop, with \term{forward scattering}
amplitude $2/3$, contributing to equations~(\ref{eq:scat2}) and
(\ref{eq:scat3}).

\begin{figure}
  \centering
  \includegraphics{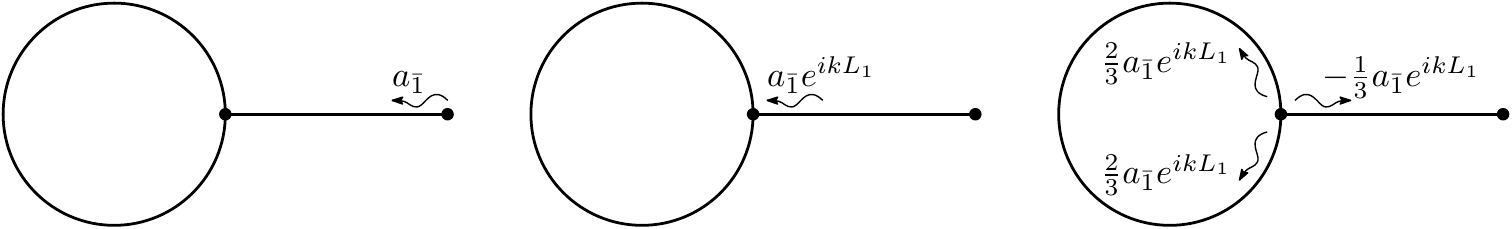}
  \caption{Scattering of a wave on a vertex.  The waves leaves the
    endpoint with amplitude $a_{\bar{1}}$, acquires the phase $e^{ik
        L_1}$ by traversing the edge and scatters in three directions
      on the vertex.}
    \label{fig:scattering}
\end{figure}

The amplitudes $a_1$, $a_2$ and $a_{\bar{2}}$ can be interpreted
similarly and undergo the same process.  Equations
(\ref{eq:endpoint_cond}) and (\ref{eq:scat1})-(\ref{eq:scat3}) can be
interpreted as describing a \term{stationary state} of such a
dynamical process.  They can be written as
\begin{equation}
  \label{eq:stationary}
  \begin{pmatrix}
    0 & -\frac13 & \frac23 & \frac23 \\
    1 & 0 & 0 & 0 \\
    0 & \frac23 & \frac23 & -\frac13 \\
    0 & \frac23 & -\frac13 & \frac23
  \end{pmatrix}
  \begin{pmatrix}
    e^{ikL_1} & 0 & 0 & 0 \\
    0 & e^{ikL_1} & 0 & 0 \\
    0 & 0 & e^{ikL_2} & 0 \\
    0 & 0 & 0 & e^{ikL_2}
  \end{pmatrix}
  \begin{pmatrix}
    a_1 \\ a_{\bar1} \\ a_2 \\ a_{\bar2}
  \end{pmatrix}
  =
  \begin{pmatrix}
    a_1 \\ a_{\bar1} \\ a_2 \\ a_{\bar2}
  \end{pmatrix}.
\end{equation}
Defining two matrices
\begin{equation}
  \label{eq:SandD}
  S =
  \begin{pmatrix}
    0 & -\frac13 & \frac23 & \frac23 \\
    1 & 0 & 0 & 0 \\
    0 & \frac23 & \frac23 & -\frac13 \\
    0 & \frac23 & -\frac13 & \frac23
  \end{pmatrix}
  \qquad\mbox{and}\qquad
  D =
  \begin{pmatrix}
    e^{ikL_1} & 0 & 0 & 0 \\
    0 & e^{ikL_1} & 0 & 0 \\
    0 & 0 & e^{ikL_2} & 0 \\
    0 & 0 & 0 & e^{ikL_2}
  \end{pmatrix},
\end{equation}
we can interpret \eqref{eq:stationary} as saying that the matrix $S
D(k)$ has 1 as its eigenvalue.  Moreover, each eigenvector of the
eigenvalue 1 gives rise to a solution of the eigenvalue equation for
the original differential operator, via equations (\ref{eq:sol_edge})
and (\ref{eq:sol_loop}).  In other words, the geometric multiplicity
of the eigenvalue 1 of the matrix $SD(k)$ is equal to the geometric
multiplicity of the eigenvalue $k^2$ of the differential operator
$-\frac{d^2}{dx^2}$.  Furthermore, both matrices $S$ and $D(k)$ are
unitary, thus for their product $SD(k)$, the algebraic and geometric
multiplicity coincide.  The only point where this relationship can
break down is at $k=0$; this is because solutions (\ref{eq:sol_edge})
and (\ref{eq:sol_loop}) are not valid at this point.

To summarize, we have the following criterion: the value $k^2 \neq 0$
is an eigenvalue of the lasso quantum graph if and only if $k$ is a
root of the equation
\begin{equation}
  \label{eq:sec_eq_lasso}
  \Sigma(k) := \det\left(I - SD(k)\right) = 0.
\end{equation}
The multiplicity of $k^2$ in the spectrum of the graph coincides with
the multiplicity of $k$ as a root of $\Sigma(k)$.  The function
$\Sigma(k)$ is called the \term{secular determinant} of the graph.

To finish this example, we mention that explicit evaluation results in
\begin{equation}
  \label{eq:Sigma_lasso}
  \Sigma(k) = \frac13 (z_2 - 1)(3z_1^2z_2 - z_1^2 + z_2 - 3),
  \qquad \mbox{where }
  z_1 = e^{ikL_1},\ z_2 = e^{ikL_2}.
\end{equation}
We will understand the reason for the factorization of $\Sigma(k)$ in
Section \ref{sec:symmetry}.  Note that the value $k=0$ is a double
root of $\Sigma(k)$ whereas $\lambda=0$ is a simple eigenvalue of the
graph (with constant as the eigenfunction).

%%%%%%%%%%%
\subsection{Secular determinant for a general Neumann graph}
\label{sec:secdet_general}

Let us now consider a general vertex with Neumann conditions
and $d$ edges incident to it.  Writing the solution on $j$-th edge as
\begin{equation}
  \label{eq:sol_jth_edge}
  f_j(x) = a_j e^{ikx} + a_{\bar{j}} e^{ik(L_j-x)},
\end{equation}
we get from the vertex conditions
\begin{align}
  \label{eq:cont_d_edges}
  a_1 + a_{\bar{1}} e^{ikL_1} = \ldots = a_d + a_{\bar{d}} e^{ikL_d}, \\
  \label{eq:cur_cons_d_edges}
  \sum_{j=1}^d a_j - \sum_{j=1}^d  a_{\bar{j}} e^{ikL_j} = 0.
\end{align}
For any $n$, $1\leq n \leq d$, equations~\eqref{eq:cont_d_edges} imply
\begin{equation}
  \sum_{j=1}^d a_j + \sum_{j=1}^d  a_{\bar{j}} e^{ikL_j} 
  = d \left( a_n + a_{\bar{n}} e^{ikL_n} \right).
\end{equation}
Subtracting from this equation \eqref{eq:cur_cons_d_edges} and solving
for $a_n$ results in
\begin{equation}
  \label{eq:scat_NK}
  a_n = -a_{\bar{n}} e^{ikL_n} + \frac2d \sum_{j=1}^d
  a_{\bar{j}} e^{ikL_j},
\end{equation}
which is a generalization of both \eqref{eq:endpoint_cond} and
\eqref{eq:scat1}-\eqref{eq:scat3}, with $d=1$ and $d=3$ correspondingly.

Now, it is clear how to generalize the matrices $S$ and $D(k)$ in equation \eqref{eq:SandD}.  Every
edge of the graph gives rise to two directed edges which inherit the
length of the edge.  The two directed edges corresponding to the same
undirected edge are \term{reversals} of each other.  The reversal of a
directed edge $j$ is denoted by $\bar{j}$.  Consider the
$2|E|$-dimensional complex space, with dimensions indexed by the directed
edges.  The matrix $D(k)$ is diagonal with entries
\begin{equation}
  \label{eq:Dk_entries}
  D(k)_{j,j} = e^{ik L_j},
\end{equation}
while the matrix $S$ has the entries
\begin{equation}
  \label{eq:S_entries}
  S_{j', j} =
  \begin{cases}
    \frac2{d_v} - 1, & \mbox{if }j' = \bar{j}, \\
    \frac2{d_v}, & \mbox{if $j'$ follows $j$ and }j' \neq \bar{j}, \\
    0, & \mbox{otherwise}.
  \end{cases}
\end{equation}
The edge $j'$ \term{follows} $j$ if the end-vertex of $j$ is the start
vertex of $j'$; $d_v$ denotes the degree of the end-vertex of $j$.
The matrix $S$ is sometimes called the \term{bond scattering matrix}.

\begin{exercise}
  Prove that the matrix $S$ defined by \eqref{eq:S_entries} on a graph
  is unitary.
\end{exercise}

As before, every eigenvector of $SD(k)$ with $k\neq0$ corresponds to
an eigenfunction of the graph via equation \eqref{eq:sol_jth_edge}.
We therefore have the following theorem.

\begin{theorem}
  \label{thm:sec_det_NK}
  Consider a graph with Neumann conditions at every vertex.
  The value $\lambda = k^2 \neq 0$ is an eigenvalue of the operator
  $-d^2/dx^2$ if and only if $k$ is the solution of
  \begin{equation}
    \label{eq:sec_det_NK}
    \Sigma(k) := \det\left(I - SD(k)\right) = 0,
  \end{equation}
  where $S$ and $D(k)$ are defined in equations \eqref{eq:Dk_entries}
  and \eqref{eq:S_entries}.
\end{theorem}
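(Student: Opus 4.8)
The plan is to adapt the argument already carried out in detail for the lasso graph in the preceding subsection, now working with a general Neumann graph. The key observation is that the correspondence between eigenfunctions of $-d^2/dx^2$ and eigenvectors of $SD(k)$ with eigenvalue $1$ is purely local at each vertex, so assembling the proof amounts to checking (a) that every solution of the eigenvalue equation on the edges can be written in the exponential form \eqref{eq:sol_jth_edge} when $k \neq 0$, (b) that the vertex conditions at \emph{every} vertex are simultaneously encoded by the linear system whose matrix is $I - SD(k)$, and (c) that the passage from ``geometric multiplicity of $1$ as an eigenvalue of $SD(k)$'' to ``multiplicity of $k$ as a root of $\Sigma(k)$'' is legitimate because $SD(k)$ is unitary.

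First I would fix $k \neq 0$ and observe that on each edge $e$, identified with $[0, L_e]$, the general solution of $-f'' = k^2 f$ is a linear combination of $e^{ikx}$ and $e^{-ikx}$; writing it in the form \eqref{eq:sol_jth_edge} with the two coefficients $a_j$, $a_{\bar j}$ attached to the two directed versions of $e$ is just a choice of basis, valid precisely because $k \neq 0$ makes $e^{ikx}$ and $e^{ik(L_j - x)}$ linearly independent. Thus the space of solutions of \eqref{eq:eig_eq_V0} on the disjoint union of edges (ignoring vertex conditions) is parametrised bijectively by the vector $a = (a_j)_{j} \in \C^{2|E|}$. Next I would show that imposing the Neumann conditions \eqref{eq:cont}--\eqref{eq:current_cons} at a vertex $v$ is equivalent to the subsystem of equations \eqref{eq:scat_NK}, one for each directed edge $j$ emanating from $v$: this is exactly the computation carried out in equations \eqref{eq:cont_d_edges}--\eqref{eq:scat_NK} above, which produced \eqref{eq:scat_NK} from continuity and current conservation and is reversible. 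Collecting these equations over all vertices, and recognising that $a_{\bar j} e^{ikL_j} = (D(k) a)_{\bar j}$ while the coefficients in \eqref{eq:scat_NK} are precisely the entries of $S$ in the row indexed by the directed edge leaving $v$, one sees that the full set of vertex conditions is equivalent to $a = S D(k) a$, i.e.\ to $a$ being a $1$-eigenvector of $SD(k)$. Hence the geometric multiplicity of $1$ as an eigenvalue of $SD(k)$ equals the dimension of the eigenspace of $k^2$ for the graph operator, and in particular $k^2$ is an eigenvalue iff $\det(I - SD(k)) = 0$.

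Finally, to get the multiplicity statement I would invoke unitarity: $D(k)$ is diagonal with unimodular entries by \eqref{eq:Dk_entries}, and $S$ is unitary by the preceding exercise, so $U := SD(k)$ is unitary; a unitary matrix is diagonalisable, so for every eigenvalue (in particular for $1$) its algebraic and geometric multiplicities coincide, which identifies the order of vanishing of $\Sigma(k) = \det(I - U)$ at a given $k$ with the geometric multiplicity of $1$, hence with the multiplicity of $k^2$ in the graph spectrum. The main obstacle, such as it is, is purely bookkeeping: one must be careful that the index conventions (which directed edge ``follows'' which, how a loop contributes two directed edges and thus two rows, how the start/end vertices are assigned) match up so that the locally derived equations \eqref{eq:scat_NK} really do assemble into the single matrix equation with $S$ as in \eqref{eq:S_entries}; loops and multiple edges are where sign and indexing errors would creep in, and it is worth stating explicitly that the derivation of \eqref{eq:scat_NK} used only the continuity and current-conservation relations at $v$ and therefore applies verbatim regardless of whether some of the edges at $v$ coincide. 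Since all of this is either already done in the lasso example or reduces to the cited unitarity facts, no further serious ingredient is needed.
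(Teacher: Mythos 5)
Your proposal is correct and follows essentially the same route as the paper, which proves the theorem by deriving equation \eqref{eq:scat_NK} from the Neumann conditions at a general vertex and then assembling these relations into the fixed-point equation $a = SD(k)a$, exactly as in the lasso example. Your explicit remarks on the reversibility of the derivation of \eqref{eq:scat_NK} and on unitarity forcing the algebraic and geometric multiplicities of the eigenvalue $1$ to coincide are sound and simply make precise what the paper leaves implicit.
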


\begin{exercise}
  \label{hw:dirichlet_vertex}
  Incorporate vertices of degree 1 with Dirichlet conditions by
  showing that back-scattering from such a vertex has amplitude $-1$
  (in contrast with $1$ from a Neumann vertex, see
  the first case of \eqref{eq:S_entries} with $d=1$).  This matches
  with the classic reflection principle of the wave equation on an
  interval.
\end{exercise}

\begin{remark}
  Another method used to prove the Weyl's Law
  (Lemma~\ref{lem:weyl_law}) is to observe \cite{Gri_pams13} that the
  eigenvalues of the matrix $SD(k)$ lie on the unit circle and move in
  the counter-clockwise direction as $k$ is increased.  There are
  $2|E|$ of the eigenvalues and their average angular speed can be
  calculated to be $\mathcal{L}/|E|$.  Thus the frequency of an
  eigenvalue crossing the positive real axis is the average speed
  times the number of eigenvalues divided by the length of the circle,
  giving $2 |E| \times \mathcal{L} / |E| / (2\pi) = \mathcal{L} /
  \pi$.
\end{remark}

%%%%%%%%%
\subsection{Example: secular determinant of star graphs with three edges}
\label{sec:sec_det_star}

Consider again a star graph with 3 edges and Neumann
conditions everywhere.  Ordering the directed edges as $\left[1, 2, 3,
  \bar{1}, \bar{2}, \bar{3}\right]$, where $j$ is directed away from the
central vertex and $\bar{j}$ is directed towards the central vertex,
the bond scattering matrix is
\begin{equation}
  \label{eq:scat_matr_Nstar}
  S =
  \begin{pmatrix}
    0 & 0 & 0 & -1/3 & 2/3 & 2/3 \\
    0 & 0 & 0 & 2/3 & -1/3 & 2/3 \\
    0 & 0 & 0 & 2/3 & 2/3 & -1/3 \\
    1 & 0 & 0 & 0 & 0 & 0 \\
    0 & 1 & 0 & 0 & 0 & 0 \\
    0 & 0 & 1 & 0 & 0 & 0
  \end{pmatrix}.
\end{equation}
Evaluating the secular determinant $\Sigma(k)$ from equation
\eqref{eq:sec_det_NK}, we get
\begin{equation}
  \label{eq:Sigma_Nstar}
  \Sigma(k) = -z_1^2 z_2^2 z_3^2 
  - \frac13 \left(z_1^2 z_2^2 + z_2^2 z_3^2 + z_3^2 z_1^2\right)
  + \frac13 \left(z_1^2 + z_2^2 + z_3^2\right) + 1, 
  \qquad \mbox{where } z_j = e^{ikL_j}.
\end{equation}
Dividing this expression by $i z_1z_2z_3 = i e^{ik(L_1+L_2+L_3)}$ and
using the Euler's formula we can transform equation~\eqref{eq:Sigma_Nstar} to
the form~\eqref{eq:3star_robust} obtained previously (up to a constant).

According to Exercise~\ref{hw:dirichlet_vertex}, supplying the star
graph with Dirichlet conditions at the endpoints results in the bond
scattering matrix
\begin{equation}
  \label{eq:scat_matr_Dstar}
  S =
  \begin{pmatrix}
    0 & 0 & 0 & -1/3 & 2/3 & 2/3 \\
    0 & 0 & 0 & 2/3 & -1/3 & 2/3 \\
    0 & 0 & 0 & 2/3 & 2/3 & -1/3 \\
    -1 & 0 & 0 & 0 & 0 & 0 \\
    0 & -1 & 0 & 0 & 0 & 0 \\
    0 & 0 & -1 & 0 & 0 & 0
  \end{pmatrix},
\end{equation}
and the secular determinant
\begin{equation}
  \label{eq:Sigma_Dstar}
  \Sigma(k) = z_1^2 z_2^2 z_3^2 
  - \frac13 \left(z_1^2 z_2^2 + z_2^2 z_3^2 + z_3^2 z_1^2\right)
  - \frac13 \left(z_1^2 + z_2^2 + z_3^2\right) + 1.
\end{equation}
Again, dividing it by $z_1z_2z_3 = e^{ik(L_1+L_2+L_3)}$
brings it close to the previously obtained form,
equation~\eqref{eq:Dirichlet_star}, namely to
\begin{multline}
  \label{eq:robust_Dstar}
  e^{-ik(L_1+L_2+L_3)}\Sigma(k) = \sin(kL_1)\sin(kL_2)\cos(kL_3)\\ 
  + \cos(kL_1)\sin(kL_2)\sin(kL_3) + \sin(kL_1)\cos(kL_2)\sin(kL_3).
\end{multline}

%%%%%%%%%
\subsection{Real secular determinant}

Theorem~\ref{thm:sec_det_NK} gives a handy tool for looking for the
eigenvalues of a quantum graph.  However, $\Sigma(k)$, as defined by
equation~\eqref{eq:sec_det_NK} is a complex valued function (that
needs to be evaluated on the real line --- at least when looking for
positive eigenvalues).  A complex function is equal to zero when both
real and imaginary part are equal to zero.  It would be nicer to have
one equation instead of two.

There are several indications that it should be possible.  First, the fact that
we do have roots of a complex function on the real line\footnote{Since
  our operator is self-adjoint, the eigenvalues $\lambda=k^2$ must be
  real and therefore the roots of $\Sigma(k)$ are restricted to real
  and imaginary axes.  Since the operator is bounded from below and
  unbounded from above, infinitely many of the eigenvalues must lie on
  the real axis} is atypical; it suggests that the function has some
symmetries.  Second, in the two examples that we considered in
Section~\ref{sec:sec_det_star} we succeeded in making the secular
determinant real.  It turns out that the same method works in general.

\begin{lemma}
  \label{lem:zeta_real_valued}
  Let 
  \begin{equation}
    \label{eq:tot_length}
    \mathcal{L} = \sum_{e\in E} L_e
  \end{equation}
  denote the total length of the graph.
  The analytic function 
  \begin{equation}
    \label{eq:sec_func}
    \zeta(k) = \frac{e^{-ik \mathcal{L}}}{\sqrt{\det S}} \det\left(I - S D(k) \right),
  \end{equation}
  is real on the real line and has the same zeros as the secular
  determinant $\Sigma(k)$.
\end{lemma}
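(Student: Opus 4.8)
The plan is to exploit the structural identity $\det(I - SD) = \det(-SD)\,\det\big((SD)^{-1} - I\big)$ together with the unitarity of both $S$ and $D(k)$. First I would observe that $\det D(k) = \prod_{j} D(k)_{j,j}$, where the product runs over all $2|E|$ directed edges, and since each undirected edge of length $L_e$ contributes two diagonal entries $e^{ikL_e}$, we get $\det D(k) = e^{2ik\mathcal{L}}$. Consequently $\det(SD(k)) = \det S \cdot e^{2ik\mathcal{L}}$, a quantity whose modulus is $1$ (as $S$ is unitary, $|\det S| = 1$), and whose square root is $\sqrt{\det S}\, e^{ik\mathcal{L}}$. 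This is precisely the prefactor appearing in the definition of $\zeta(k)$ up to the normalization, which is the first hint that the cancellation will work out.

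Next I would carry out the core computation. Writing $U = SD(k)$, which is unitary for real $k$, we have
\begin{equation*}
  \overline{\det(I - U)} = \det(I - \overline{U}) = \det(I - (U^T)^{-1}),
\end{equation*}
using $\overline{U} = (U^{-1})^T = (U^T)^{-1}$ for unitary $U$ and real $k$. Now I would use the general matrix identity $\det(I - M^{-1}) = \det(M^{-1})\det(M - I) = (-1)^{n}\det(M^{-1})\det(I - M)$ with $n = 2|E|$ and $M = U^T$, noting $\det(U^T) = \det U$. Combining,
\begin{equation*}
  \overline{\det(I - SD(k))} = (-1)^{2|E|} \frac{1}{\det(SD(k))}\det(I - SD(k)) = \frac{\det(I - SD(k))}{\det S \cdot e^{2ik\mathcal{L}}}.
\end{equation*}
Then I would check that $\zeta(k) = \frac{e^{-ik\mathcal{L}}}{\sqrt{\det S}}\Sigma(k)$ satisfies $\overline{\zeta(k)} = \zeta(k)$ by substituting the relation just derived: the factor $e^{-ik\mathcal{L}}$ conjugates to $e^{ik\mathcal{L}}$, the factor $1/\sqrt{\det S}$ conjugates to $\sqrt{\det S}$ (since $|\det S|=1$ forces $\overline{1/\sqrt{\det S}} = \sqrt{\det S}$ for a consistent choice of branch), and $\Sigma(k)$ conjugates to $\Sigma(k)/(\det S \cdot e^{2ik\mathcal{L}})$; multiplying these three pieces returns exactly $\zeta(k)$. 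Finally, since $e^{-ik\mathcal{L}}/\sqrt{\det S}$ never vanishes, $\zeta$ and $\Sigma$ have the same zeros with the same multiplicities, and analyticity of $\zeta$ is clear because it is an entire function (the prefactor is entire and nonvanishing).

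The main obstacle I anticipate is bookkeeping around the branch of $\sqrt{\det S}$ and making the conjugation identity $\overline{1/\sqrt{\det S}} = \sqrt{\det S}$ precise: one must fix a square root of the unimodular number $\det S$ once and for all and verify the sign works out, rather than hand-wave it. A secondary point to handle carefully is the transpose manipulation — $S$ is unitary but not necessarily symmetric, so I should make sure that I only use $\det(U^T) = \det U$ and $\overline{U} = (U^{-1})^T$, both of which are valid, and never accidentally assume $U^T = U$. Everything else is routine linear algebra, and the degree count $2|E|$ being even makes the sign $(-1)^{2|E|} = 1$, which keeps the formula clean.
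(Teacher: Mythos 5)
Your proposal is correct and takes essentially the same route as the paper: both arguments reduce to the unitarity of $U = SD(k)$, the computation $\det D(k) = e^{2ik\mathcal{L}}$, and the determinant identity showing $\overline{\det(I-U)} = \det(I-U)/\det U$, so that the prefactor $(\det U)^{-1/2}$ exactly compensates under conjugation. The only cosmetic difference is that you work with $\overline{U} = (U^{T})^{-1}$ and track the sign $(-1)^{2|E|}$ explicitly, whereas the paper uses $I - U^{*} = (U-I)U^{*}$ and leaves the even-dimension sign implicit.
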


\begin{proof}
  We remark that
  \begin{equation}
    \label{eq:det_Dk}
    \det(D(k)) = e^{2ik \mathcal{L}},
  \end{equation}
  and the matrix $D(k)$ is unitary for real $k$.  Denoting the unitary
  matrix $SD(k)$ by $U$, and using the identity $\det(AB) = \det A\,
  \det B$, we can rewrite
  \begin{equation}
    \label{eq:zeta_rewrite}
    \zeta(k) = (\det U)^{-1/2} \det\left(I - U \right),
  \end{equation}
  Using the unitarity of $U$, we now evaluate
  \begin{align*}
    \overline{\zeta(k)} &= (\det U)^{1/2}\,
    \det\left(I - U^*\right)\\
    &= (\det U)^{1/2}\,
    \det\left(U - I \right)\,
    \det U^* = \zeta(k),
  \end{align*}
  where we used the identities
  \begin{equation}
    \label{eq:matrix_identities}
    I - U^* = UU^* - U^* = (U-I)U^*
   \quad \mbox{and} \quad
    \det\left(U^*\right) = \left( \det U \right)^{-1}.
  \end{equation}
\end{proof}

\begin{exercise}
  \label{hw:detS}
  Show that for a graph with Neumann conditions at every
  vertex except for $n$ vertices of degree 1 where Dirichlet
  conditions are imposed, the determinant of $S$ is $\det S =
  (-1)^{|E|-|V|+n}$. 
\end{exercise}

A detailed study of the secular determinant, including the
interpretation of the coefficients of the polynomials like
(\ref{eq:Sigma_Nstar}) and (\ref{eq:Sigma_Dstar}) in terms of special
closed paths on the graph, appears in \cite{BanHarJoy_jpa12}.

%%%%%%%%%%
\subsection{Remarks on numerical calculation of graph eigenvalues}

Expression \eqref{eq:sec_func} which is guaranteed to be real for real
$k$ allows for a simple way to compute eigenvalues of a quantum
graph: find roots of a real function.

The naive method is to evaluate the function $\zeta(k)$ on a dense
enough set of points to get bounds for the eigenvalues, within which
the bisection method can be employed.  The bisection method can be
substituted by a more sophisticated tool such as Brent-Dekker
method.  For a small graph, the function $\zeta(k)$, which is a
combination of trigonometric functions of incommensurate frequency,
can be derived explicitly, giving access to its derivative and
therefore Newton-like methods.

However, the initial evaluation may miss a pair of almost-degenerate
eigenvalues.  To check for this possibility, it is very effective to
plot the difference between the counting function for the computed
eigenvalues and the Weyl's Law.  The approximate location of the
missed eigenvalues (if any) can be seen very clearly, see
Fig.~\ref{fig:counting} for a typical example.  See also
\cite{Sch_incol06} for another method.

\begin{figure}
  \centering
  \includegraphics{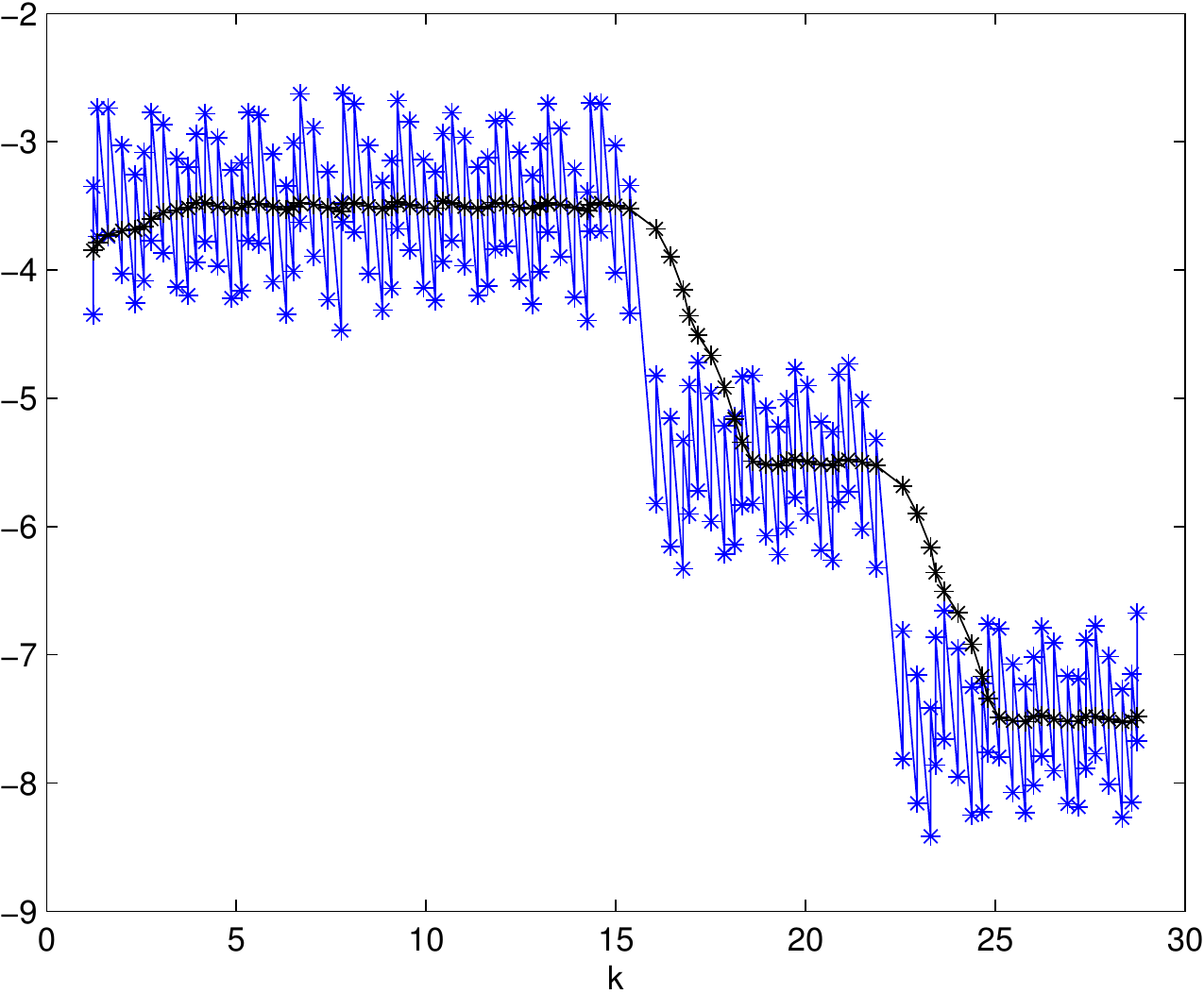}
  \caption{The difference between the counting function of the numerically found eigenvalue and the Weyl's estimate.  The plot shows that two pairs of eigenvalues were missed, around $k=15.5$ and $k=22$.}
  \label{fig:counting}
\end{figure}

A smarter method for the initial step is to use the interlacing
inequalities of Lemma~\ref{lem:interlace_ND} to bracket the
eigenvalues.  Unfortunately, this involves computing eigenvalues of
another graph, but it may be much simpler, as in the case of the star
graphs.

%%%%%%%%%%%%%%%%%%%%%%%%%%%%%%%%%%%%%%%%%%%%%%%%%%%%%%%
\section{Symmetry and isospectrality}
\label{sec:symmetry}

\subsection{Example: 3-mandarin graph}
\label{sec:mandarin_fact}

The 3-mandarin graph is a graph with two vertices and three edges
connecting them, see Fig.~\ref{fig:mandarin}.  If one is uncomfortable
with multiple edges running between a pair of vertices, extra
Neumann vertices of degree two may be placed on some edges,
see Section~\ref{sec:fake_vertex}.

\begin{figure}
  \centering
  \includegraphics{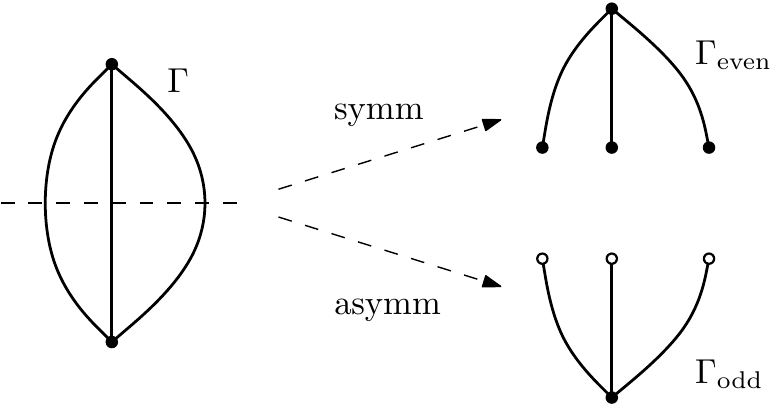}
  \caption{The mandarin graph with 3 edges (left) and its
    decomposition into the even and odd quotients (right). The
    Dirichlet vertices are distinguished as empty cicrles.}
  \label{fig:mandarin}
\end{figure}

Labelling the edges running down by $1$, $2$ and $3$, and using the
ordering $\left[1, 2, 3, \bar1, \bar2, \bar3\right]$, the matrix $S$
becomes
\begin{equation}
  \label{eq:scat_matr_mandarin}
  S =
  \begin{pmatrix}
    0 & 0 & 0 & -1/3 & 2/3 & 2/3 \\
    0 & 0 & 0 & 2/3 & -1/3 & 2/3 \\
    0 & 0 & 0 & 2/3 & 2/3 & -1/3 \\
    -1/3 & 2/3 & 2/3 & 0 & 0 & 0 \\
    2/3 & -1/3 & 2/3 & 0 & 0 & 0 \\
    2/3 & 2/3 & -1/3 & 0 & 0 & 0
  \end{pmatrix}.
\end{equation}
Denoting, as before, $z_j = e^{i k L_j}$, $j=1,2,3$, we get $D(k) =
\diag(z_1, z_2, z_3, z_1, z_2, z_3)$.  The secular determinant
simplifies to
\begin{multline}
  \label{eq:Sigma_mandarin}
  \Sigma(k) = \left( -z_1 z_2 z_3 
    - \frac13 \left(z_1 z_2 + z_2 z_3 + z_3 z_1\right)
    + \frac13 \left(z_1 + z_2 + z_3\right) + 1 \right) \\
  \times \left( z_1 z_2 z_3 
    - \frac13 \left(z_1 z_2 + z_2 z_3 + z_3 z_1\right)
    - \frac13 \left(z_1 + z_2 + z_3\right) + 1 \right),
\end{multline}
where the factors coincide with the secular determinants we obtained
for the star graphs with three edges and Neumann and Dirichlet conditions
at the endpoints, correspondingly, see equations
\eqref{eq:Sigma_Nstar} and \eqref{eq:Sigma_Dstar}, modulo the change
$z_j^2 \leftrightarrow z_j$.

The reason for this factorization is the symmetry.  The graph, as
shown in Fig.~\ref{fig:mandarin}, is symmetric with respect to the
vertical (up-down) reflection.  This means that the reflected
eigenfunction is still an eigenfunction.  Denote by $Rf$ the reflected
version of an eigenfunction $f$.  By linearity, $f_e = f+Rf$ and $f_o
= f-Rf$ satisfy the eigenvalue equation with the same $\lambda$.  They
may be identically zero, but not both at the same time, since
$(f_e+f_o)/2 = f \not\equiv 0$.  And under the action of $R$, $f_e$ is even
and $f_o$ is odd:
\begin{equation}
  \label{eq:even_odd}
  Rf_e = f_e, \qquad Rf_o = -f_o.
\end{equation}
Using this idea one can show that every eigenspace has a basis of
eigenfunctions each of which is either even or odd.  Indeed, starting
with an arbitrary basis of size $m$, we produce $2m$ even/odd
combinations from them.  These combinations span the eigenspace (since
$f = (f_e+f_o)/2$), so it remains to choose $m$ linearly independent
vectors among them.

Consider an odd eigenfunction on the mandarin graph.  Let $x_m$ be the
midpoint of the first edge.  This point is fixed under the action
of reflection, thus $(Rf)(x_m) = f(x_m)$.  On the other hand, $Rf =
-f$, therefore $f(x_m) = -f(x_m) = 0$.  The same applies for the
midpoint of every edge.  Therefore every odd eigenfunction is the
eigenfunction of the half of the mandarin with Dirichlet conditions at
the midpoints.  The converse is also true: starting from an
eigenfunction of the half with Dirichlet boundary, we obtain an
eigenfunction of the full graph by planting two copies and multiplying
one of them by $-1$.

A similar reasoning for even eigenfunctions of the mandarin graph
shows that they are in one-to-one correspondence with the
eigenfunctions of the half of the graph with Neumann boundary.  The
half of a 3-mandarin is a star graph with three edges, see
Fig.~\ref{fig:mandarin}; the edge lengths are half of the mandarin's.

It turns out that this symmetry of the mandarin graph (and the
corresponding factorization \eqref{eq:Sigma_mandarin}) leads to
interesting anomalies in the number of of zeros of the eigenfunctions.
This subject will be visited again in
Section~\ref{sec:nodal_mandarin}.

%%%%%%%%%
\subsection{Quotient graphs}

To put the observations of the previous section on a more formal
footing, the Hilbert space $H(\Gamma)$ of functions on the mandarin
graph (that are sufficiently smooth and satisfy correct vertex
conditions) can be decomposed into the direct sum of two orthogonal
subspaces $H_o$ and $H_e$, which are invariant with respect to
operator $-d^2/dx^2$ acting on the graph.  Restrictions of the
operator to these subspaces can be identified with this operator
acting on two smaller graphs, a star with Dirichlet ends and a star
with Neumann ends.

Such a smaller graph, together with its vertex conditions, is called
\term{quotient graph} and was introduced by Band, Parzanchevski and
Ben-Shach \cite{BanParBen_jpa09,ParBan_jga10} to study isospectrality.
To produce a quotient graph one needs a quantum graph, a group of
symmetries (not necessarily the largest possible) and a representation
of this group.  We will not describe the full procedure, which can be
learned from the already mentioned papers and the forthcoming article
\cite{BanBerJoyLiu_prep16}.  Instead we will briefly describe its
consequences and explain one particular construction that leads to a pair
of graphs with identical spectra (i.e.\ an \emph{isospectral} pair).

Let $\Gamma$ be a graph with a finite group of symmetries $G$.  To
each irreducible representation $\rho$ there corresponds a subspace
$H_\rho$ of the Hilbert space $H(\Gamma)$.  This is the subspace of
functions that transform according to the representation $\rho$ when
acted upon by the symmetries from $G$.  In some sources, such
functions are called \term{equivariant vectors}; the subspace $H_\rho$
is called the \term{isotypic component}.  The subspaces corresponding
to different irreducible representations are orthogonal, the space
$H(\Gamma)$ is a direct sum of $H_\rho$ over all irreps of the group
$G$.

If $\rho$ has dimension $d>1$, then every eigenvalue of the
Hamiltonian restricted to $H_\rho$ has multiplicity which is a
multiple of $d$.  Moreover, the secular determinant $\Sigma(k)$
factorizes into factors that correspond to the irreps $\rho$ of $G$.
Each factor is raised to a power which is the dimension of the
corresponding $\rho$ (hence the degeneracy of the corresponding
eigenvalue),
\begin{equation}
  \label{eq:fact_Sigma}
  \Sigma_\Gamma(k) = \prod_{\text{irreps }\rho}
  \Big(\Sigma_\rho(k)\Big)^{\dim(\rho)}.
\end{equation}

\begin{figure}
  \centering
  \includegraphics{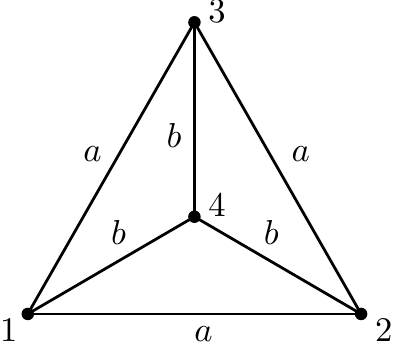}
  \caption{A tetrahedron graph which is invariant under rotation by
    $2\pi/3$ and horizontal reflection.  Its group of symmetry is
    $S_3$: an arbitrary permutation of vertices $1$, $2$ and $3$.}
  \label{fig:tetrahedron}
\end{figure}

\begin{example}[from \cite{BanBerJoyLiu_prep16}]
  Consider the tetrahedron graph from Fig.~\ref{fig:tetrahedron}.
  This graph is symmetric under reflection and rotation by $2\pi/3$.
  The corresponding group of symmetries is $S_3$, the group of
  permutations of 3 objects; in this case it can be thought of as
  permuting the vertices number $1$, $2$ and $3$.  The group $S_3$ has
  three irreducible representations, \term{trivial},
  \term{alternating} and $2d$.  The first two are one-dimensional,
  while the latter is $2$-dimensional (as suggested by its name).  The
  secular determinant $\Sigma(k)$ has the corresponding factorization
  \begin{equation}
    \label{eq:fact_S3}
    \Sigma(k) = \frac1{27} (z_a - 1) (3z_a z_b^2 - z_b^2 + z_a - 3) 
    (3 z_a^2 z_b^2 + 2 z_a z_b^2 - z_a^2 + z_b^2 - 2 z_a - 3)^2.
  \end{equation}
\end{example}

Note that it may happen that for some $\rho$ the subspace $H_\rho$ is
trivial.  In this case the corresponding factor $\Sigma_\rho(k)$ is $1$
and there are no eigenvalues corresponding to this representation.

%%%%%%%%%
\subsection{Example: dihedral graphs}

The present example originates from \cite{BanShaSmi_jpa06} and is the
origin of the theory of \cite{BanParBen_jpa09,ParBan_jga10}.  Consider the graph
on Fig.~\ref{fig:predihedral}(a).  It has the symmetries of the square, thus its full group
of symmetries is the dihedral group of degree four $D_4$ (and order
eight, hence another notation, $D_8$, which causes much confusion).

\begin{figure}
  \centering
  \includegraphics{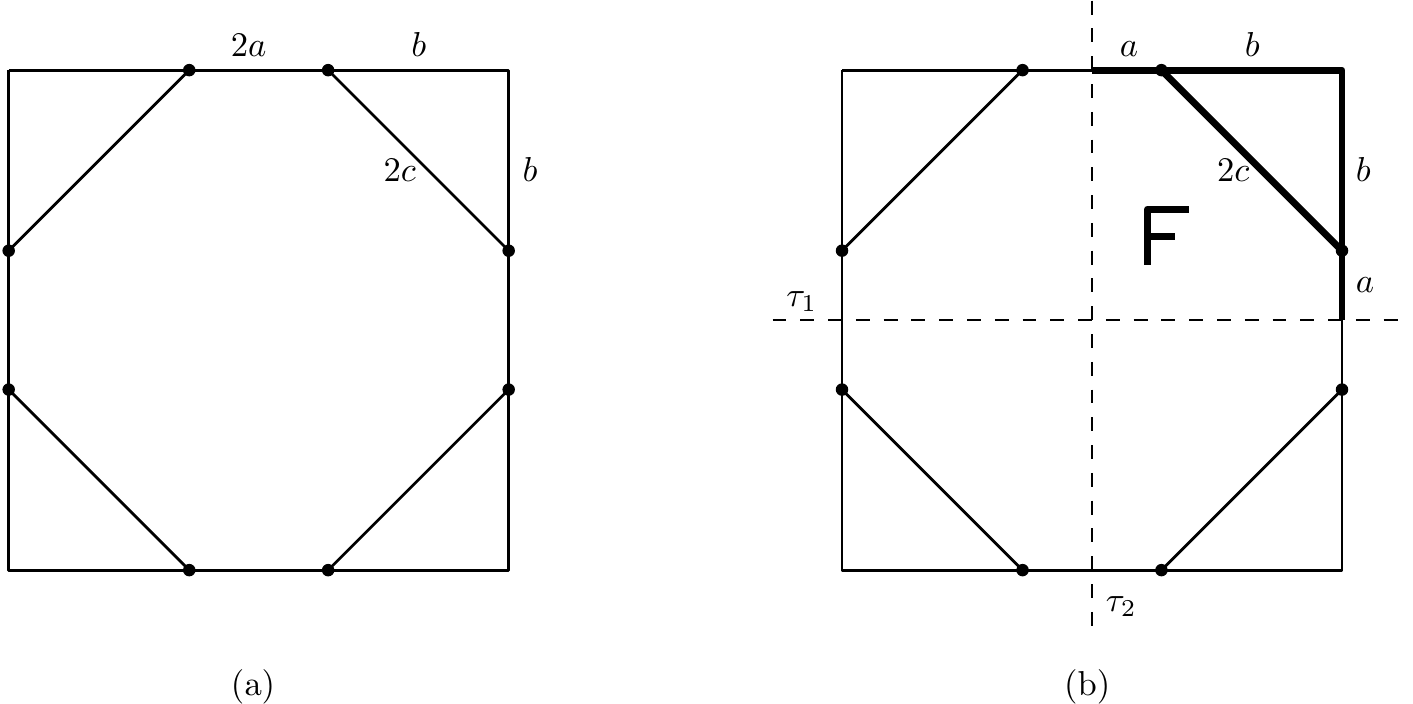}
  \caption{(a) A graph with symmetry group $D_4$, the dihedral group
    of degree 4; (b) symmetry axes of the first subgroup and a choice
    of fundamental domain.}
  \label{fig:predihedral}
\end{figure}

We will first consider the subgroup generated by the vertical (up-down)
reflection $\tau_1$ and the horizontal (left-right) reflection
$\tau_2$.  One of the irreducible representations of this subgroup is
\begin{equation}
  \label{eq:irrep_R1}
  \tau_1 \mapsto (1), \qquad \tau_2 \mapsto (-1),
\end{equation}
where $(1)$ (correspondingly $(-1)$) stands for the operation of
multiplication by $1$ (correspondingly $-1$).

\begin{figure}
  \centering
  \includegraphics{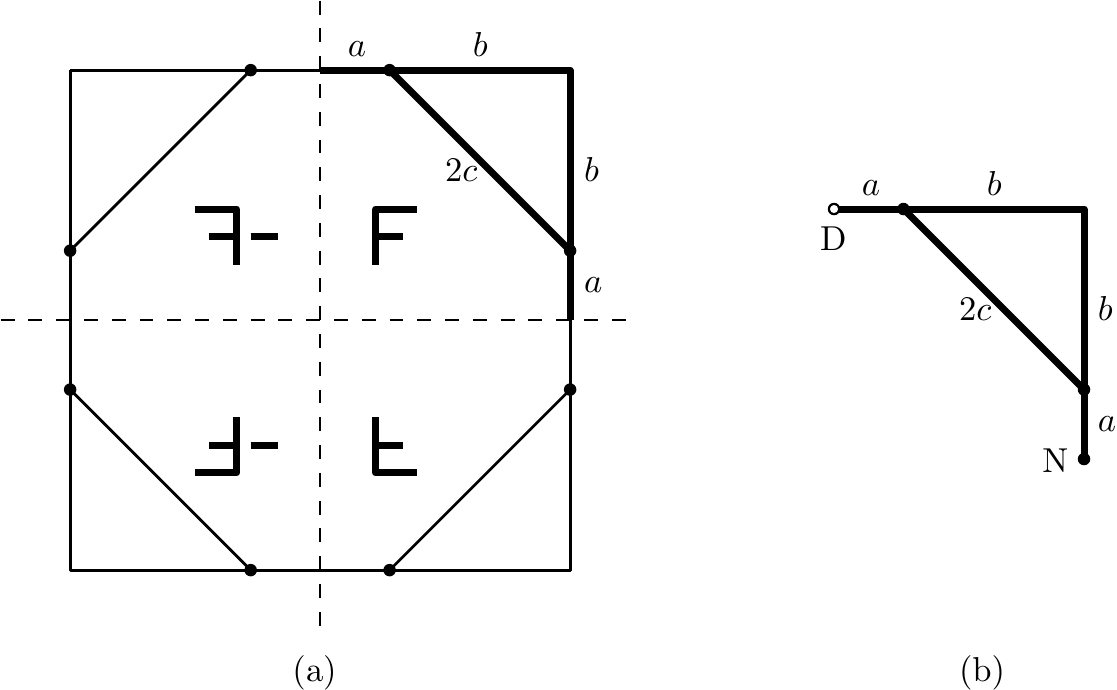}
  \caption{Constructing a quotient graph with respect to the
    representation \eqref{eq:irrep_R1}.}
  \label{fig:dihedral1}
\end{figure}

To understand the functions that transform according to representation
\eqref{eq:irrep_R1}, we choose as the \term{fundamental domain} (a
subgraph that covers the entire graph under the action of the
subgroup) the top right quarter of the graph shown in
Fig.~\ref{fig:predihedral}(b), and plant there a function $F$.
Applying the vertical reflection $\tau_1$, we find that in the bottom
right quarter of the graph, the function must be equal to $F$
multiplied by $1$ and suitably reflected.  Applying the horizontal
reflection $\tau_2$, we find that the left side of the graph must be
populated by the copies of the function $F$ multiplied by $-1$, see
Fig.~\ref{fig:dihedral1}(a).

We have already discovered in Section~\ref{sec:mandarin_fact} that at
the point where $F$ meets $-F$, the function must vanish (i.e.\ have
the Dirichlet condition), whereas at the point where $F$ meets $F$ the
condition must be Neumann.  Thus we obtain the quotient graph in
Fig.~\ref{fig:dihedral1}(c).

Let us now repeat the same procedure but choose the diagonal reflections
$d_1$ and $d_2$ as the generators of the subgroup, together with the
representation
\begin{equation}
  \label{eq:irrep_R2}
  d_1 \mapsto (1), \qquad d_2 \mapsto (-1).
\end{equation}
Starting with a fundamental domain, reflecting and
multiplying it as prescribed by \eqref{eq:irrep_R2}, we fill the
entire graph as in Fig.~\ref{fig:dihedral2}(a).  The corresponding conditions on the
fundamental domain are shown in Fig.~\ref{fig:dihedral2}(b).

\begin{figure}
  \centering
  \includegraphics{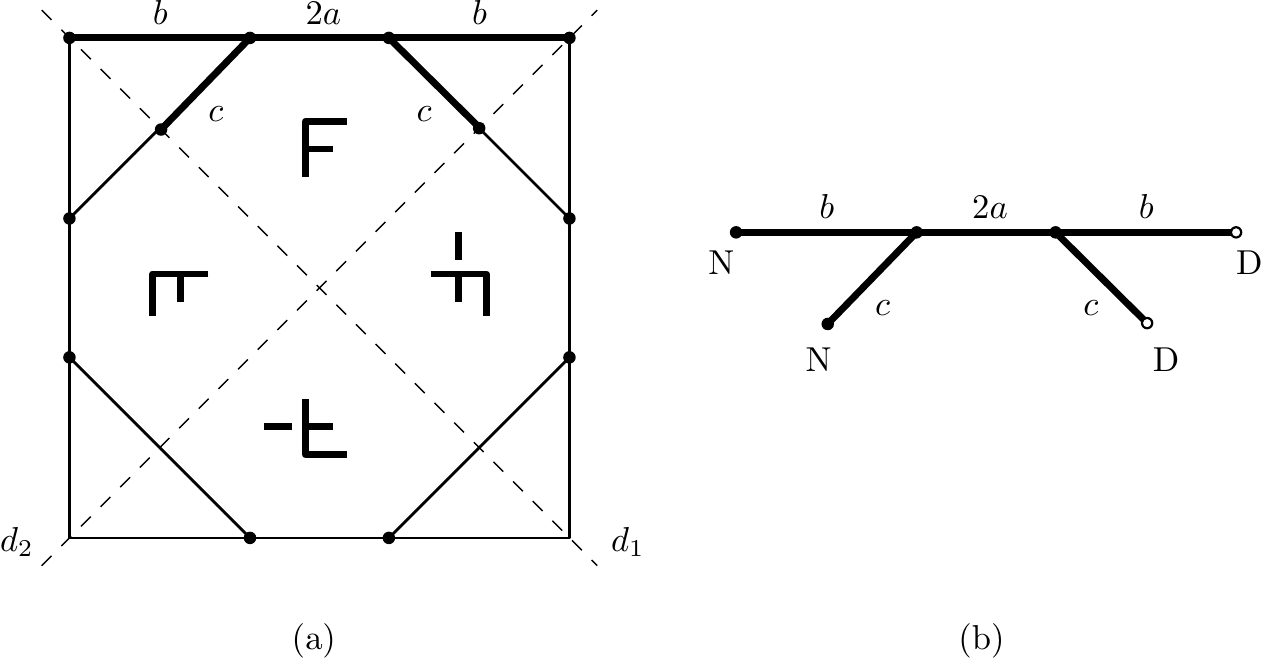}
  \caption{Constructing a quotient graph with respect to the
    representation \eqref{eq:irrep_R2}.}
  \label{fig:dihedral2}
\end{figure}

The most interesting feature of the two quotient subgraphs is that they
are \term{isospectral}, i.e. have exactly the same eigenvalues.  This
can be shown by a transplantation procedure {\it a la\/} Buser
\cite{Bus_aif86,Bus+_imrn94}, which describes a unitary equivalence of
the corresponding operators.  Another possibility (admittedly more
tedious) is to find the secular determinants of the two graphs.

\begin{figure}
  \centering
  \includegraphics{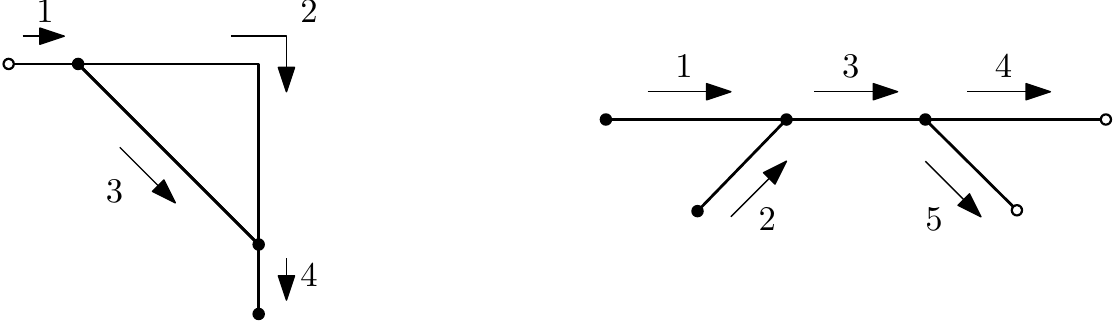}
  \caption{Numbering of edges of the two isospectral graph.}
  \label{fig:dihedrals_numbering}
\end{figure}

To this end, number the edges of the two graphs as shown in
Fig.~\ref{fig:dihedrals_numbering}.  Starting with the graph with a
cycle (which we will now call the \term{dihedral graph}, and its
partner the \term{dihedral tree}), order its edges as $[1,2,3,4,\bar1,
\bar2, \bar3, \bar4]$.  Then
\begin{equation}
  \label{eq:SD_dihedral}
  S  =
  \begin{pmatrix}
    0 & 0 & 0 & 0 & -1 & 0 & 0 & 0 \\
    2/3 & 0 & 0 & 0 & 0 & -1/3 & 2/3 & 0\\
    2/3 & 0 & 0 & 0 & 0 & 2/3 & -1/3 & 0\\
    0 & 2/3 & 2/3 & 0 & 0 & 0 & 0 & -1/3 \\
    -1/3 & 0 & 0 & 0 & 0 & 2/3 & 2/3 & 0 \\
    0 & -1/3 & 2/3 & 0 & 0 & 0 & 0 & 2/3 \\
    0 & 2/3 & -1/3 & 0 & 0 & 0 & 0 & 2/3 \\
    0 & 0 & 0 & 1 & 0 & 0 & 0 & 0
  \end{pmatrix},
\end{equation}
and $D(k) = \diag(z_a, z_b^2, z_c^2, z_a, z_a, z_b^2, z_c^2, z_a)$,
where $z_s=e^{iks}$, $s=a,b,c$.  Evaluating the determinant, we obtain
\begin{equation}
  \label{eq:Sigma_dihedral}
  \Sigma(k) = -z_a^4 z_b^4 z_c^4 + \frac19 \left(z_a^4 z_b^4 + z_b^4
    z_c^4 + z_c^4 z_a^4\right) + \frac89 \left(z_a^4 - 1\right)z_b^2 z_c^2
    -\frac19 \left(z_a^4  + z_b^4 + z_c^4\right) + 1.
\end{equation}

Repeating the procedure for the dihedral tree, which has
$10\times10$ matrices $S$ and $D(k)$, we arrive to the secular
determinant which is \emph{again given by expression}
\eqref{eq:Sigma_dihedral}.  It is easy to check that $0$ is eigenvalue
of neither graph and therefore the graphs are isospectral.

The glimpse of the underlying reason for the isospectrality can be
seen in the secular determinant of the original graph (that of
Fig.~\ref{fig:predihedral}(a)).  As predicted by \eqref{eq:fact_Sigma}, it factorizes:
\begin{align*}
  \label{eq:Sigma_orig}
  \Sigma(k) = &
  \left(9 z_a^4 z_b^4 z_c^4 - \left(z_a^4 z_b^4 + z_b^4
    z_c^4 + z_c^4 z_a^4\right) - 8\left(z_a^4 - 1\right)z_b^2 z_c^2
    + z_a^4  + z_b^4 + z_c^4 - 1\right)^2 \\
  &\times \left(3 z_a^2 z_b^2 z_c^2 + z_a^2 z_b^2 + z_b^2
    z_c^2 + z_c^2 z_a^2  - z_a^2 - z_b^2 - z_c^2 - 3\right) \\
  &\times \left(3 z_a^2 z_b^2 z_c^2 + z_a^2 z_b^2 - z_b^2
    z_c^2 + z_c^2 z_a^2  - z_a^2 + z_b^2 + z_c^2 + 3\right) \\
  &\times \left(3 z_a^2 z_b^2 z_c^2 - z_a^2 z_b^2 + z_b^2
    z_c^2 - z_c^2 z_a^2  - z_a^2 + z_b^2 + z_c^2 - 3\right) \\
  &\times \left(3 z_a^2 z_b^2 z_c^2 - z_a^2 z_b^2 - z_b^2
    z_c^2 - z_c^2 z_a^2  - z_a^2 - z_b^2 - z_c^2 + 3\right),
\end{align*}
up to an overall factor.  The last four terms correspond to the four
one-dimensional representations of the group $D_4$, while the first one,
squared, corresponds to the two-dimensional representation.  The term
inside the square also coincides with the secular determinant of the
two quotient graphs.  This suggests that although we constructed them
as quotients by the (one-dimensional) representations of two different
symmetry subgroups, they are both realizable as quotients by the
two-dimensional representation of the whole group.  This is indeed
shown in \cite{BanParBen_jpa09}, together with a general criterion for
isospectrality involving induction of representations from subgroups to
the whole group.

%%%%%%%%%%%%%%%%%%%%%%%%%%%%%%%%%%%%%%%%%%%%%%%%%%%%
\section{Magnetic Schr\"odinger operator and nodal count}

Magnetic field is introduced into the Schr\"odinger equation via the
magnetic \term{vector potential} usually denoted $A(x)$.  In our case,
$A(x)$ is a one-dimensional vector: it changes sign if the direction
of the edge is reversed.  The Schr\"odinger eigenvalue equation then
takes the form
\begin{equation}
  \label{eq:mag_schrod}
  -\left( \frac d{dx} - i A(x)\right)^2 f(x) + V(x) f(x) = k^2 f(x),
\end{equation}
where the square is interpreted in the sense of operators, i.e.
\begin{multline}
  \label{eq:square_op}
  \left( \frac d{dx} - i A(x)\right)^2 f(x) = \left( \frac d{dx} 
    - iA(x)\right) \Big(f'(x) - i A(x)f(x)\Big)\\ 
  = f''(x) - i \Big(A(x)f(x)\Big)' - iA(x) f'(x) - A^2(x)f(x). 
\end{multline}

To understand the ``strange'' definition of $A(x)$ a little better,
consider the equation
\begin{equation}
  \label{eq:const_A}
  -\left( \frac d{dx} - i A \right)^2 f(x) = k^2 f(x),
\end{equation}
where $A$ is a ``normal'' constant, on the interval $[0,L]$.
Solutions of this equation, $e^{\pm ikx + iAx}$, under the change of
variables $x \mapsto L-x$ become solutions of a slightly different
equation, 
\begin{equation}
  \label{eq:const_A_swap}
  -\left( \frac d{dx} + i A \right)^2 f(L-x) = k^2 f(L-x).
\end{equation}
But this variable change is just a reparametrization of the interval
in terms of the distance from the other end and should not affect the
laws of physics.  Letting $A$ to be the ``one-form'' which transforms
according to $A(L-x) = -A(x)$ addresses this problem.

To understand the effect of the magnetic potential $A(x)$ on the
secular determinant derived in Section~\ref{sec:secdet_general}, we
write the solution of \eqref{eq:mag_schrod} with $V\equiv0$ on the
$j$-th edge in the form
\begin{equation}
  \label{eq:mag_sol_jth_edge_int}
  f_j(x) = a_j e^{ikx + i\int_0^x A(x)} + a_{\bar{j}} e^{ik(L_j-x) +
    i\int_{L_j}^xA(x)}.
\end{equation}
Taking $A(x)$ to be the constant vector on the edge (as we will see,
this results in no loss of generality), we obtain a somewhat more
manageable form
\begin{equation}
  \label{eq:mag_sol_jth_edge}
  f_j(x) = a_j e^{i(k+A)x} + a_{\bar{j}} e^{i(k-A)(L_j-x)},
\end{equation}
which also highlights the fact that the wave travelling in the negative
direction (the second term) feels the magnetic potential as $-A$.

The vertex conditions also change with the introduction of the
magnetic field.  It is convenient to define the operator
\begin{equation}
  \label{eq:Dmag_operator}
  D = \frac d{dx} - i A(x),
\end{equation}
so that the Schr\"odinger operator can be written as $-D^2 + V$ and
the Neumann vertex conditions become
\begin{align}
  \label{eq:cont_mag}
  &f(x) \ \mbox{is continuous},\\
  \label{eq:current_cons_mag}
  &\sum_{e \sim v} Df(v) = 0,
\end{align}
which is to be compared to \eqref{eq:cont}-\eqref{eq:current_cons}.

Applying these conditions to the solution form
\eqref{eq:mag_sol_jth_edge}, we get
\begin{align}
  \label{eq:cont_d_edges_mag}
  a_1 + a_{\bar{j}} e^{ikL_1 + iA_{\bar1} L_1} = \ldots = a_d + a_{\bar{d}}
  e^{ikL_d + iA_{\bar{d}}L_d}, \\
  \label{eq:cur_cons_d_edges_mag}
  \sum_{j=1}^d a_j - \sum_{j=1}^d  a_{\bar{j}} e^{ikL_j + iA_{\bar{j}}L_j} = 0,
\end{align}
which assumes that all edges attached to $v$ are oriented outward and
also introduces the notation $A_{\bar{j}} = -A_j$ in agreement with
the nature of $A$.  In fact the same answer would be obtained had we
started with equation \eqref{eq:mag_sol_jth_edge_int} instead, and defined
\begin{equation}
  \label{eq:Aj_def}
  A_j = \frac1{L_j} \int_0^{L_j} A(x), 
  \qquad A_{\bar{j}} = \frac1{L_j} \int_{L_j}^0 A(x).
\end{equation}
Either way, the only change in the definition of the secular
determinant $\Sigma(k) = \det(I - SD(k))$ is in the diagonal matrix
$D(k)$ which becomes
\begin{equation}
  \label{eq:Dk_def_mag}
  D(k)_{b,b} = e^{ikL_b + i \int_b A(x)}.
\end{equation}

As we just saw, the precise nature of $A(x)$ is not important; the
only quantity that enters $\Sigma(k)$ is the integral of $A(x)$.  In
fact, even more is true: only the values of the integral of $A(x)$
around the \emph{cycles} of the graph are important.

\begin{definition}
  The \term{flux} of the magnetic field given by $A(x)$ through an
  oriented cycle $\gamma$ of $\Gamma$ is the integral of $A(x)$ over
  the cycle,
  \begin{equation}
    \label{eq:flux_def}
    \alpha_\gamma = \int_\gamma A(x).
  \end{equation}
\end{definition}

\begin{theorem}
  Consider two operators on the same quantum graph $\Gamma$ differing
  only in their magnetic potentials $A_1(x)$ and $A_2(x)$. Then they
  are unitarily equivalent if their fluxes through every cycle on
  $\Gamma$ are equal modulo $2\pi$.  In fact, it is enough to compute
  the fluxes through a fundamental set of $\beta = |E|-|V|+1$ cycles.
  A magnetic perturbation of a Schr\"odinger operator on a graph
  $\Gamma$ is thus fully specified by a set of $\beta$ numbers between
  $-\pi$ and $\pi$. 
\end{theorem}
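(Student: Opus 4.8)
The plan is to realize the claimed unitary equivalence explicitly as a \emph{gauge transformation}. Put $B(x) = A_1(x) - A_2(x)$; by hypothesis the flux of $B$ through every cycle of $\Gamma$ is a multiple of $2\pi$. Suppose for the moment one can produce a function $\phi$ on $\Gamma$ with $\phi' = B$ on each edge and with $e^{i\phi}$ \emph{continuous} on $\Gamma$. Let $U$ be multiplication by $e^{i\phi}$, which is unitary on $L^2(\Gamma)$ since $e^{i\phi}$ is unimodular. Writing $D_j = \frac{d}{dx} - iA_j$, a one-line computation gives $D_1(e^{i\phi}f) = e^{i\phi}(f' + i\phi' f - iA_1 f) = e^{i\phi}(f' - iA_2 f) = e^{i\phi} D_2 f$, i.e.\ $U^{-1}D_1 U = D_2$, hence $U^{-1}(-D_1^2 + V)U = -D_2^2 + V$ because $V$ commutes with $U$. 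Moreover, since $e^{i\phi}$ takes a single value at each vertex, it can be pulled out of the sums in the continuity and current–conservation conditions, so $U$ maps the operator domain of the $A_1$-operator onto that of the $A_2$-operator (and $\phi$ inherits the regularity of an antiderivative of $B$, so the required Sobolev regularity is preserved edgewise). Everything thus reduces to building $\phi$.

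To construct $\phi$, fix a spanning tree $T$ of $\Gamma$ (assume $\Gamma$ connected; otherwise treat each component separately) and a base vertex $v_0$. Define $\phi$ on $T$ by integrating $B$ from $v_0$ along the unique tree path to each point; since $T$ has no cycles this is unambiguous and continuous on $T$. Each of the $\beta = |E| - |V| + 1$ edges $e \notin T$ closes with $T$ a unique fundamental cycle $\gamma_e$, and the obstruction to extending $\phi$ continuously across $e$ — so that the jump of $\phi$ along $e$ equals $\int_e B$ — is precisely $\int_{\gamma_e} B$, which lies in $2\pi\Z$ by assumption. Hence $\phi$ descends to a well-defined continuous map $\Gamma \to \R/2\pi\Z$, and $e^{i\phi}$ is the continuous unimodular function needed above. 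The same bookkeeping proves the reduction to a fundamental set of $\beta$ cycles: the cycle space of a graph is generated over $\Z$ by $\{\gamma_e\}_{e\notin T}$, so if $\int_{\gamma_e} B \in 2\pi\Z$ for all $e \notin T$, then $\int_\gamma B \in 2\pi\Z$ for every cycle $\gamma$; applying this to $B = A_1 - A_2$ gives the "flux through every cycle" hypothesis from the $\beta$-cycle hypothesis.

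For the parametrization statement, observe that the map sending a magnetic potential $A$ to its vector of fundamental fluxes $(\alpha_{\gamma_1}, \dots, \alpha_{\gamma_\beta}) \in (\R/2\pi\Z)^\beta$ is surjective: given a prescribed vector, let $A$ vanish on all tree edges and equal the constant $\alpha_{\gamma_e}/L_e$ on each non-tree edge $e$ (oriented with the cycle), which realizes exactly the prescribed fluxes. Combined with the equivalence just established — two potentials with the same flux vector yield unitarily equivalent operators — this shows the magnetic perturbation is determined by, and can be normalized to, $\beta$ numbers in $(-\pi,\pi]$.

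The main obstacle is the middle step: one must verify carefully that the gauge function $\phi$ can be chosen so that $e^{i\phi}$ is continuous \emph{on all of $\Gamma$}, and not merely edgewise absolutely continuous, because this is exactly what makes $U$ respect the vertex matching conditions — and it is here, and only here, that the flux hypothesis enters. The remaining ingredients (the conjugation identity relating $D_1$ and $D_2$, unitarity of multiplication by a unimodular function, and the homological fact that fundamental cycles span the cycle space) are routine.
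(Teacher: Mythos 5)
Your gauge-transformation argument is correct and complete: conjugating by multiplication by $e^{i\phi}$ with $\phi'=A_1-A_2$, building $\phi$ along a spanning tree, and using the flux hypothesis only to make $e^{i\phi}$ single-valued at the vertices is exactly the standard proof, which the paper does not reproduce but delegates to \cite{KosSch_cmp03} and Section 2.6 of \cite{BerKuc_graphs}. Your treatment of the vertex conditions, the reduction to fundamental cycles via the cycle space, and the surjectivity of the flux map all check out, so this matches the referenced proof in substance.
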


The proof of this simple theorem can be found in, for example,
\cite{KosSch_cmp03} or \cite{BerKuc_graphs}, Section 2.6.

%%%%%%%%%%
% \subsection{Example: Lollipop graph}
% Maybe not?

%%%%%%%%%
\subsection{Example: Dihedral graph}

To study the influence of the magnetic perturbation on the eigenvalues
of the dihedral graph, we put the magnetic flux $\alpha$ through its
loop by using $D(k) = \diag(z_a,\, e^{i\alpha} z_b^2,\, z_c^2, z_a, z_a,\,
e^{-i\alpha} z_b^2,\, z_c^2, z_a)$ together with the matrix $S$ as given
in (\ref{eq:SD_dihedral}).  This results in the secular determinant
\begin{equation}
  \label{eq:Sigma_dihedral_mag}
  \Sigma(k) = -z_a^4 z_b^4 z_c^4 + \frac19 \left(z_a^4 z_b^4 + z_b^4
    z_c^4 + z_c^4 z_a^4\right) + \frac89 \cos(\alpha) \left(z_a^4 - 1\right)z_b^2 z_c^2
    -\frac19 \left(z_a^4  + z_b^4 + z_c^4\right) + 1.
\end{equation}

\begin{figure}
  \centering
  \includegraphics{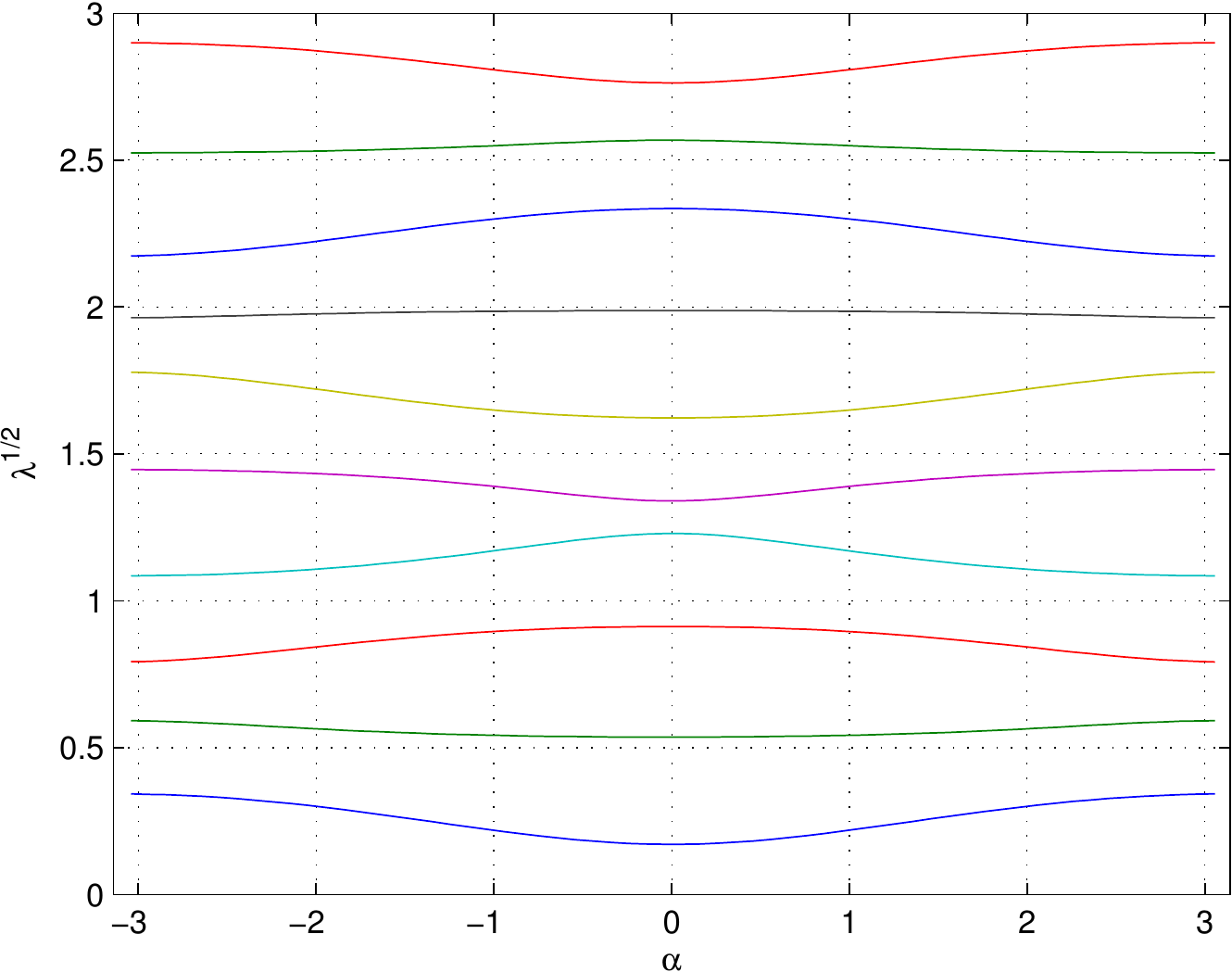}
  \caption{The square roots of the first few eigenvalues of the
    dihedral graph as function of the magnetic flux through the cycle.
    The lengths are $a=\pi$, $b=1$, $c=\sqrt{2}$.}
  \label{fig:mag_eig}
\end{figure}

We plot the first few eigenvalues $\lambda_n(\alpha)$ of the resulting
graph as a function of the magnetic flux $\alpha$,
Fig.~\ref{fig:mag_eig}.  Since the flux is only important modulo
$2\pi$, we plot the eigenvalues over the interval $[-\pi,\pi]$.  We
observe that the eigenvalues are symmetric (even) functions with
respect to $\alpha=0$: this can be seen directly by complex
conjugating equation~(\ref{eq:mag_schrod}): if $f(x)$ is an
eigenfunction with potential $A(x)$ then $\overline{f(x)}$ is an
eigenfunction with potential $-A(x)$ with the same eigenvalue.
Sometimes the eigenvalue has a minimum at $\alpha=0$ and sometimes a
maximum.  These events do not alternate as in the Hill's
equation\footnote{\emph{Hill's differential equation}
  \cite{MagnusWinkl_hills} is a Schr\"odinger equation on $\R^1$ with
  periodic potential; its Floquet reduction is equivalent to a quantum
  graph in the shape of a circle with a magnetic flux $\alpha$ through
  it.  It is an important result of Hill's equation theory that the
  minima and maxima of $\lambda_n(\alpha)$ at $\alpha=0$ alternate
  with $n$.}; there is no strict periodicity there, but we will be able
to extract some information about them.

%%%%%%%%%
\subsection{Magnetic--nodal connection}

\begin{figure}
  \centering
  \includegraphics[scale=0.6]{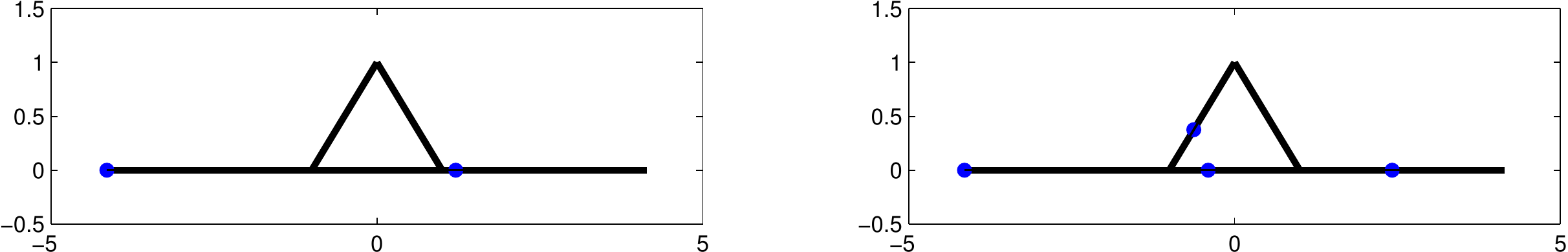}
  \caption{Location of zeros of two eigenfunctions of the dihedral
    graph.  Eigenfunctions number $n=2$ and $n=3$ are displayed.  The
    zero at the left endpoint is due to the Dirichlet condition; such
    zeros are not counted when we report the nodal count of an
    eigenfunction.}
  \label{fig:dihedral_ef}
\end{figure}

Let us for a moment come back to the case of no magnetic field,
$\alpha=0$, and study the eigenfunctions of the dihedral graph, see
Fig.~\ref{fig:dihedral_ef}.  In Table~\ref{tab:eig_dih} we give
the results of a numerical calculation for one choice of the graph's
lengths.  We list the sequential number of the eigenvalue, starting
from 1, its value, the number of zeros of the corresponding
eigenfunction, and, somewhat arbitrarily, a description of the
behavior of the eigenvalue of the magnetic dihedral graph (see
Fig.~\ref{fig:mag_eig}), namely whether it has a maximum or a minimum
at $\alpha=0$.

\begin{table}[ht]
  \centering
  \begin{tabular}{l|cccccccccc}
    $n$ &1 & 2 & 3 & 4 & 5 & 6 & 7 & 8 & 9 \\
    \hline
    $\lambda_n(0)$ & 0.1708 & 0.5359 & 0.9126 & 1.2294 & 1.3398 & 1.6225 &
    1.9877 & 2.3349 & 2.5680 \\         
    $\#$ zeros & 0 & 1 & 3 & 4 & 4 & 5 & 7 & 8 & 9 \\
    $\lambda_n(\alpha)$ at 0 & min & min & max & max & min & min & max
    & max & max
  \end{tabular}
  \vspace{0.3cm}
  \caption{Eigenvalues of a dihedral graph with the lengths $a=\pi$,
    $b=1$ and $c=\sqrt{2}$,  the number of zeros of the
    corresponding eigenfunction, and the behavior of the magnetic
    eigenvalue around the point of no magnetic field.}
  \label{tab:eig_dih}
\end{table}

A careful reader will observe a curious pattern: the magnetic
eigenvalue $\lambda_n(\alpha)$ appears to have a minimum whenever the
number of zeros of the eigenfunction, which we will denote $\phi_n$ is
less than $n$ and a maximum whenever $\phi_n = n$.  The connection
between the two was discovered in \cite{Ber_apde13} for discrete
Laplacians, an alternative proof given in \cite{Col_apde13} and an
extension to quantum graphs proved in \cite{BerWey_ptrsa14}.

Before we formulate this result we need to recall some definitions
from multivariate calculus.

\begin{definition}
  Let $F(x_1, \ldots, x_\beta)$ be a twice differentiable function of
  $\beta$ variables.
  \begin{itemize}
  \item The point $x^* = (x_1^*, \ldots, x_\beta^*)$ is a \term{critical
      point} of the function $F$ if all first derivatives of $F$
    vanish at $x^*$,
    \begin{equation*}
      \label{eq:critical_def}
      \frac{\partial F}{\partial x_j}(x_1^*, \ldots, x_\beta^*) = 0.
    \end{equation*}
  \item The \term{Hessian} matrix of $F$ at $x^*$ is the matrix of all
    second derivatives of $F$,
    \begin{equation*}
      \label{eq:Hessian_def}
      \Hess(F) = \left( \frac{\partial^2 F}{\partial x_j \partial x_k} \right)_{j,k=1}^\beta.
    \end{equation*}
    The matrix $\Hess(F)$ is symmetric therefore all of its
    eigenvalues are real.
    \item A critical point is called \term{non-degenerate} if the
      Hessian evaluated at the point is a non-degenerate matrix (has
      no zero eigenvalues).
    \item The \term{Morse index} of a critical point is the number of
      the negative eigenvalues of its Hessian.  The Second Derivative
      Test says that if the Morse index of a non-degenerate critical
      point is zero, the point is a local minimum; if it equals the
      dimension of the space, the point is a maximum.
  \end{itemize}
\end{definition}

\begin{theorem}[Berkolaiko--Weyand \cite{BerWey_ptrsa14}]
  \label{thm:main_mag}
  Let $\Gamma^{\boldsymbol\alpha}$ be a quantum graph with magnetic
  Schr\"o\-dinger operator characterized by magnetic fluxes
  $\boldsymbol\alpha = (\alpha_1, \ldots, \alpha_\beta)$ through a
  fundamental set of cycles.  Let $\psi$ be the eigenfunction of
  $\Gamma^0$ that corresponds to a simple eigenvalue
  $\lambda = \lambda_n(\Gamma^0)$ of the graph with zero magnetic
  field.  We assume that $\psi$ is non-zero on vertices of the graph.

  Then $\boldsymbol{\alpha} = (0,\ldots, 0)$ is a non-degenerate
  critical point of the function $\lambda_n(\boldsymbol{\alpha}) :=
  \lambda_n(\Gamma^{\boldsymbol{\alpha}})$ and its Morse index is
  equal to the \term{nodal surplus} $\phi - (n - 1)$, where $\phi$ is
  the number of internal zeros of $\psi$ on $\Gamma$.

\end{theorem}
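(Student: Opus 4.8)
The plan is to realize the magnetic eigenvalue $\lambda_n(\boldsymbol\alpha)$ near $\boldsymbol\alpha=0$ as a smooth function, compute its gradient and Hessian at the origin by Hellmann--Feynman type perturbation theory, and then identify the Morse index with a quantity that can be counted using the nodal/interlacing machinery from Sections~\ref{sec:interlacing} and~\ref{sec:symmetry}. First I would set up the analytic perturbation. Since $\lambda=\lambda_n(\Gamma^0)$ is simple, standard analytic perturbation theory (Rellich--Kato) gives a real-analytic family $\lambda_n(\boldsymbol\alpha)$ and a corresponding analytic family of eigenfunctions $\psi_{\boldsymbol\alpha}$ near $\boldsymbol\alpha=0$; the quadratic form is $Q_{\boldsymbol\alpha}(f)=\sum_e\int_e |D_{\boldsymbol\alpha}f|^2$, depending smoothly (indeed quadratically, after gauging $A$ to be a sum of scalar multiples of fixed cycle-forms) on $\boldsymbol\alpha$. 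Complex-conjugation symmetry $\lambda_n(-\boldsymbol\alpha)=\lambda_n(\boldsymbol\alpha)$ forces the first derivatives to vanish, so $\boldsymbol\alpha=0$ is automatically a critical point; what remains is to analyze the Hessian $H=\Hess_{\boldsymbol\alpha}\lambda_n\big|_0$.

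Next I would derive a workable formula for $H$. Using the gauge where $A$ is supported as a constant on one edge per cycle (allowed by the preceding theorem on flux), differentiating the eigenvalue equation twice and using $\langle\psi,\partial_{\alpha_j}\psi\rangle$-type identities yields $H_{jk}$ as a bilinear expression in $\psi$: a "diagonal" term coming from $\partial^2_{\alpha_j\alpha_k}Q$ (which is essentially $\int \psi^2$ on the relevant edges, hence sign-definite) minus a "second-order" term $2\operatorname{Re}\langle \partial_{\alpha_j}Q\,\psi,\ (\lambda-H_0)^{-1}\Pi\,\partial_{\alpha_k}Q\,\psi\rangle$ involving the reduced resolvent on the orthogonal complement of $\psi$. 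The cleaner route, which I expect to be the one that works, is to pass to the secular-function description of Theorem~\ref{thm:sec_det_NK}: eigenvalues are zeros of $\zeta(k,\boldsymbol\alpha)$ (the real secular determinant of Lemma~\ref{lem:zeta_real_valued}, now depending on $\boldsymbol\alpha$ through the phases in $D(k)$), and the signature of $H$ can be read off from the mixed partials of $\zeta$ in $(k,\boldsymbol\alpha)$ at the eigenvalue — specifically, $\operatorname{sign}H$ is related by an implicit-function-theorem computation to $\operatorname{sign}(\partial_k\zeta)$ times the signature of $\big(\partial^2_{\alpha_j\alpha_k}\zeta\big)$ restricted appropriately. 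Because $\psi$ is nonzero on vertices, the vertex scattering is "generic" and these derivatives are nondegenerate, giving non-degeneracy of the critical point.

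Then comes the combinatorial heart: showing $\operatorname{index}(H)=\phi-(n-1)$. Here I would use the nodal/interlacing technology. The nodal surplus is known (by the tree-graph nodal theorem cited after Exercise~\ref{hw:zeros_count}, and its extensions) to equal a difference of eigenvalue-counting quantities: cutting the graph at the $\beta$ independent cycles turns $\Gamma$ into a tree, and the surplus $\phi-(n-1)$ counts how the position of $\lambda$ among the spectrum shifts under these $\beta$ rank-one (Dirichlet) perturbations — each governed by Lemma~\ref{lem:interlace_ND} or Lemma~\ref{lem:interlace_gluing}. On the other side, I would show that the Hessian $H$ is, up to congruence, exactly the matrix recording the same $\beta$ perturbations: introducing flux $\alpha_j$ through the $j$-th cycle is, to second order, the analytic interpolation between the two sign choices in reconnecting that cycle, and the second-derivative sign of the eigenvalue along $\alpha_j$ detects on which side of $\lambda$ the reconnected eigenvalue sits. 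Matching these two descriptions edge-by-cycle gives the identity.

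The main obstacle I anticipate is the second step made precise — getting a clean, manifestly real bilinear formula for $H$ and proving it is congruent to the "cycle-reconnection" matrix whose signature the nodal count controls. The diagonal part is easy; the reduced-resolvent (second-order perturbation) part is where one must work, because a priori it is an infinite sum over the rest of the spectrum with no obvious sign. I expect this to be handled by exploiting the special structure of magnetic perturbations (they act only through boundary phases, so $\partial_{\alpha_j}Q\,\psi$ lives in a $\beta$-dimensional "cycle space" of distributions supported at the cut points), which collapses the resolvent contribution to a finite $\beta\times\beta$ computation — precisely the Dirichlet-to-Neumann / vertex-scattering data that Lemmas~\ref{lem:interlace_ND} and~\ref{lem:interlace_gluing} are built to handle. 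Once the resolvent term is finite-dimensional, comparing signatures is linear algebra of the Cauchy-interlacing flavor, and the non-vanishing of $\psi$ at vertices is exactly what guarantees the relevant $\beta\times\beta$ matrix is invertible, yielding simultaneously the non-degeneracy claim and the Morse-index formula.
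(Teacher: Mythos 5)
The paper does not actually prove Theorem~\ref{thm:main_mag}; it is quoted from \cite{BerWey_ptrsa14} (with the discrete antecedents \cite{Ber_apde13,Col_apde13}), so there is no in-paper proof to compare against. Judged on its own terms, your proposal gets the easy parts right --- simplicity of $\lambda_n(0)$ plus Rellich--Kato gives analyticity, and the symmetry $\lambda_n(-\boldsymbol\alpha)=\lambda_n(\boldsymbol\alpha)$ kills the gradient, so $\boldsymbol\alpha=0$ is a critical point --- but everything that makes the theorem a theorem is deferred to steps that are announced rather than carried out. Concretely: (i) the claim that non-vanishing of $\psi$ at the vertices makes the relevant $\beta\times\beta$ matrix invertible, hence the critical point non-degenerate, is asserted via ``the vertex scattering is generic''; this is not an argument, and non-degeneracy is genuinely the delicate part (in \cite{BerWey_ptrsa14} it comes out of an explicit determinant identity after cutting the cycles at points where $\psi\neq 0$, not from a genericity heuristic). (ii) The reduction of the second-order (reduced-resolvent) term to a finite $\beta\times\beta$ form is the technical core of the real proof, and your sketch does not say what that matrix is, only that you ``expect'' the cycle structure to collapse it.

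More importantly, the combinatorial half --- $\operatorname{ind}(H)=\phi-(n-1)$ --- rests on the statement that the nodal surplus ``counts how the position of $\lambda$ shifts under $\beta$ rank-one Dirichlet perturbations'' at the cycle cut points. That statement is itself a nontrivial theorem (essentially the content of \cite{Ber_cmp08,BanBerSmi_ahp12}), not something you may invoke for free while proving Theorem~\ref{thm:main_mag}; and even granting it, you never exhibit the congruence between your Hessian and the ``cycle-reconnection'' matrix --- you only say that ``comparing signatures is linear algebra of the Cauchy-interlacing flavor.'' The actual mechanism in \cite{BerWey_ptrsa14} is different in an essential way: one cuts each cycle at an interior point where $\psi\neq 0$, introduces a second family of $\beta$ parameters ($\delta$-type coupling strengths at the cut points), and exploits a duality between the Hessians in the magnetic and the cut directions so that the magnetic Morse index is $\beta$ minus the cut-direction index; the latter is then tied to the zero count by a Sturm oscillation argument on the cut (tree) graph, where Lemma~\ref{lem:interlace_ND} and Corollary~\ref{cor:nodal_bound}-type counting finally enter. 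Without that second parameter family (or an equivalent device) your plan has no bridge from the analytic object $H$ to the integer $\phi$, so as written the proposal is a plausible research programme but not a proof.
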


As a corollary we get a simple but useful bound on the number of zeros
of $n$-th eigenfunction.

\begin{corollary}
  \label{cor:nodal_bound}
  Let $\Gamma$ be a quantum graph with a Schr\"odinger operator,
  $\lambda_n$ be its $n$-th eigenvalue and $\psi_n$ the corresponding
  eigenfunction.  If $\lambda_n$ is simple and $\psi_n$ does not
  vanish on vertices, the number of zeros $\phi_n$ of the function
  $\psi_n$ is a well-defined quantity which satisfies
  \begin{equation}
    \label{eq:nodal_bound}
    0 \leq \phi_n - (n-1) \leq \beta = |E| - |V| + 1.
  \end{equation}
\end{corollary}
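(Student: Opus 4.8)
The plan is to obtain Corollary~\ref{cor:nodal_bound} as an essentially immediate consequence of Theorem~\ref{thm:main_mag}, together with one elementary observation about zeros on edges and one about Morse indices.

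First I would check that $\phi_n$ is a well-defined, finite quantity. On each edge the eigenfunction $\psi_n$ solves a second-order linear ordinary differential equation, so on that edge it is either identically zero or has only isolated zeros. If $\psi_n$ vanished identically on some edge, it would in particular vanish at the two endpoints of that edge, which are vertices of $\Gamma$; this contradicts the hypothesis that $\psi_n$ is non-zero on vertices. Hence $\psi_n$ is not identically zero on any edge, each (compact) edge carries only finitely many zeros, and the total number $\phi_n$ of internal zeros is finite and unambiguous.

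Next I would apply Theorem~\ref{thm:main_mag} to $\Gamma^0 = \Gamma$ with $\psi = \psi_n$: because $\lambda_n$ is simple and $\psi_n$ does not vanish on vertices, the theorem tells us that the magnetic eigenvalue $\lambda_n(\boldsymbol\alpha)$, viewed as a function of the flux vector $\boldsymbol\alpha \in \R^\beta$, has $\boldsymbol\alpha = (0,\ldots,0)$ as a non-degenerate critical point, and that the Morse index of this critical point equals the nodal surplus $\phi_n - (n-1)$. Now the Morse index of any critical point of a twice-differentiable function of $\beta$ variables is the number of negative eigenvalues of its Hessian, a real symmetric $\beta \times \beta$ matrix, and this number necessarily lies between $0$ and $\beta$. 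Combining, $0 \le \phi_n - (n-1) \le \beta = |E|-|V|+1$, which is precisely \eqref{eq:nodal_bound}.

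I do not expect a genuine obstacle inside this corollary: all of the substance — the smooth (indeed analytic, via standard perturbation theory for a simple eigenvalue) dependence of $\lambda_n$ on the fluxes, the criticality at $\boldsymbol\alpha = 0$, and the identification of the Morse index with the nodal surplus — is already packaged inside Theorem~\ref{thm:main_mag}. The only points needing a moment's care are the finiteness of $\phi_n$, handled above by unique continuation on the edges, and the purely linear-algebraic remark that a Morse index cannot exceed the dimension of the parameter space, which is what forces the upper bound $\beta$.
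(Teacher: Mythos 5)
Your proposal is correct and follows exactly the route the paper intends: the remark after the corollary states that the bound ``follows easily from Theorem~\ref{thm:main_mag} due to the Morse index being an integer between 0 and the dimension of the space of parameters,'' which is precisely your argument. Your additional check that $\psi_n$ cannot vanish identically on an edge (so $\phi_n$ is finite and well-defined) is a sensible detail the paper leaves implicit.
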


\begin{remark}
  The result of Corollary~\ref{cor:nodal_bound} actually predates
  Theorem~\ref{thm:main_mag} by a considerable time.  It goes back to
  the results on trees \cite{PokPryObe_mz96,Sch_wrcm06}, their
  extension to $\beta>0$ for nodal domains \cite{Ber_cmp08} and to
  number of zeros \cite{BanBerSmi_ahp12}.  But it also follows easily
  from Theorem~\ref{thm:main_mag} due to the Morse index being an
  integer between 0 and the dimension of the space of parameters.
\end{remark}

\begin{exercise}
  \label{hw:even_zeros}
  Let $\Gamma$ be a quantum graph with a Schr\"odinger operator,
  $\lambda$ be its simple eigenvalue and $\psi$ the corresponding
  eigenfunction which does not vanish on vertices.  Prove that $\psi$
  has an even number of zeros on every cycle of the graph $\Gamma$.
\end{exercise}

%%%%%%%%%
\subsection{Nodal count of the dihedral graph}
\label{sec:nodal_dihedral}

It turns out there is an explicit formula for the number of zeros of
the $n$-th eigenfunction of the dihedral graph.  This formula was
discovered by Aronovitch, Band, Oren and Smilansky
\cite{BanOreSmi_incoll08}.  The discovery was remarkable as no
explicit formula for the \emph{eigenvalues} of the dihedral graph is
known.

The formula was proved in \cite{BanBerSmi_ahp12} using a fairly
involved construction of opening the graph by attaching two
phase-synchronized infinite leads.  It would be nice to be able to
prove this result from Theorem~\ref{thm:main_mag} directly, but we
know of no such proof.  Instead, we will give here a relatively simple
proof showcasing the power of the interlacing results of
Section~\ref{lem:interlace_ND}, Corollary~\ref{cor:nodal_bound} and
the simple observation that the number of zeros on any cycle of the
graph must be even.  This proof has not previously appeared in any
other source.

\begin{theorem}[Conjectured in \cite{BanOreSmi_incoll08}, first proved
  in \cite{BanBerSmi_ahp12}]
  \label{thm:nodal_dihedral}
  Let $n$-th eigenvalue of a dihedral graph be simple and the
  corresponding eigenfunction not vanish at the vertices (except at the
  Dirichlet vertex).  Then the number of zeros of the eigenfunction is
  \begin{equation}
    \label{eq:nodal_dihedral}
    \phi_n = n - \mod_2\left( \left\lfloor \frac{b+c}{a+b+c}n
      \right\rfloor \right).
  \end{equation}
\end{theorem}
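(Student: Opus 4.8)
The plan is to pin down the nodal count $\phi_n$ by bracketing $\lambda_n$ between eigenvalues of two simpler graphs obtained from the dihedral graph by changing vertex conditions, and then to read off the number of zeros on each piece. Recall from the construction in Section~\ref{sec:nodal_dihedral} that the dihedral graph is a cycle of length $2b+2c$ (made of two edges of length $b$ and two of length $c$, say) with a pendant path of length $a$ attached, terminating in a Dirichlet vertex. First I would apply Lemma~\ref{lem:interlace_ND} twice: imposing a Dirichlet condition at the vertex $v$ where the pendant path meets the cycle disconnects the graph into (i) the interval $[0,a]$ with Dirichlet at both ends and (ii) the cycle of circumference $a':=2b+2c$ with a Dirichlet point on it, which is just an interval of length $a'$ with Dirichlet ends. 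Thus $\Gamma_\infty$ is a disjoint union of two Dirichlet intervals whose eigenvalues are explicitly $(\pi m/a)^2$ and $(\pi m/a')^2$, and $\lambda_n(\Gamma) \le \lambda_n(\Gamma_\infty) \le \lambda_{n+1}(\Gamma)$.

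Next I would count, for a given energy $k^2$ with $k=\sqrt{\lambda_n}$, how many eigenvalues of the two intervals lie below it: exactly $\lfloor ka/\pi \rfloor$ from the short interval and $\lfloor ka'/\pi\rfloor$ from the cycle-interval, so that (generically, i.e.\ when no accidental coincidences occur) $n-1 \le \lfloor ka/\pi\rfloor + \lfloor ka'/\pi\rfloor \le n$, which essentially determines $k$ up to the poles of the relevant secular function and pins down $n$ in terms of $\lfloor \frac{a'}{a+a'}\cdot(\text{total count})\rfloor = \lfloor \frac{2(b+c)}{2(a+b+c)}\,(\text{count})\rfloor$. The $\mod_2$ and the exact shift in \eqref{eq:nodal_dihedral} must then come from the correction terms in the interlacing: an equality in Lemma~\ref{lem:interlace_ND} is possible exactly when the eigenfunction vanishes at $v$, and more importantly, whether $\lambda_n(\Gamma)$ sits just below or just above a given Dirichlet level controls whether the eigenfunction, restricted to the cycle, has $\lfloor ka'/\pi\rfloor$ or $\lfloor ka'/\pi\rfloor \pm 1$ sign changes. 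Here is where Exercise~\ref{hw:even_zeros} enters decisively: the number of zeros on the cycle must be \emph{even}, so one of the two a priori possibilities is forbidden, and this parity constraint is precisely what produces the floor-of-$\frac{b+c}{a+b+c}n$ inside a $\mod_2$.

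Assembling the count: on the pendant path, between its Dirichlet end and the value at $v$, an eigenfunction at energy $k^2$ with $ka/\pi$ not an integer has exactly $\lfloor ka/\pi\rfloor$ interior zeros, \emph{plus possibly one more depending on the sign relationship between the boundary value and the interior} — but since $\psi$ does not vanish at $v$, this is determined. On the cycle, the number of zeros is the unique even integer in $\{\lfloor ka'/\pi\rfloor - 1, \lfloor ka'/\pi\rfloor, \lfloor ka'/\pi\rfloor+1\}$ consistent with the value at $v$; adding the two and using $a'=2(b+c)$, $a+a' = 2(a+b+c)$ together with the interlacing count for $n$ yields \eqref{eq:nodal_dihedral}. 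The main obstacle I expect is the bookkeeping of \emph{which} side of each Dirichlet threshold $\lambda_n$ lies on, i.e.\ translating the strict inequalities in Lemma~\ref{lem:interlace_ND} (and the equality case involving eigenfunctions vanishing at $v$, excluded by hypothesis) into a clean statement about $\lfloor ka/\pi\rfloor$ versus $\lfloor ka'/\pi\rfloor$; this is a finite but delicate case analysis, and getting the parity argument from Exercise~\ref{hw:even_zeros} to interact correctly with it — rather than merely consistently — is the crux. A secondary technical point is handling the genuinely non-generic lengths (rational ratios of $a,b,c$), but the theorem's hypothesis that $\lambda_n$ is simple and $\psi_n$ is nonvanishing at vertices should rule out the degenerate configurations, so I would dispatch that at the start and then work in the generic regime throughout.
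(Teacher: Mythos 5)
There are genuine gaps here, and they cluster around the two places you yourself flagged as delicate. First, the geometry: the dihedral graph is not a cycle with a single pendant path. From Fig.~\ref{fig:dihedrals_numbering} and the matrix \eqref{eq:SD_dihedral} it has \emph{two} pendant edges of length $a$ (one terminating in a Dirichlet vertex, one in a Neumann vertex), attached at the \emph{two} vertices of a cycle formed by edges of lengths $2b$ and $2c$. Consequently, imposing Dirichlet at a single vertex does not disconnect the graph into the two intervals you describe; to disconnect it you must change conditions at both attachment vertices, and by Lemma~\ref{lem:interlace_ND} that shifts the eigenvalue count by up to two. The resulting ambiguity of $\pm 1$ in ``how many comparison eigenvalues lie below $\lambda_n$'' is fatal to any parity argument. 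The paper obtains the tight interlacing $\lambda_n \leq \tilde\lambda_n \leq \lambda_{n+1}$ against the set $\tilde{\sigma} = \left\{ \frac{\pi}{2a} n_1 \right\} \cup \left\{ \frac{\pi}{2(b+c)} n_2 \right\}$ only via a non-obvious device: it takes \emph{two} disjoint copies of the dihedral graph, imposes Dirichlet at the left attachment vertex of one copy and the right attachment vertex of the other (two uses of Lemma~\ref{lem:interlace_ND}), then splits edges using Lemma~\ref{lem:interlace_gluing} in reverse; the doubled multiplicities in $\sigma\cup\sigma$ and $\tilde\sigma\cup\tilde\sigma$ then collapse the two-step interlacing back into a one-step statement for a single copy. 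Your proposal contains no substitute for this step --- indeed you write the single-perturbation inequality $\lambda_n(\Gamma)\leq\lambda_n(\Gamma_\infty)\leq\lambda_{n+1}(\Gamma)$ immediately after announcing that you apply the lemma twice, which is inconsistent.

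Second, the endgame does not close as stated. A window of three consecutive integers $\left\{\lfloor ka'/\pi\rfloor - 1,\ \lfloor ka'/\pi\rfloor,\ \lfloor ka'/\pi\rfloor + 1\right\}$ generically contains \emph{two} even integers, so ``the unique even integer consistent with the value at $v$'' is not determined by parity alone. The paper instead proceeds through a chain of exact counts: Lemma~\ref{lem:students_lemma} (which requires its own short but non-trivial proof, not merely the heuristic $n-1 \leq \lfloor ka/\pi\rfloor + \lfloor ka'/\pi\rfloor \leq n$) converts ``exactly $n-1$ points of $\tilde\sigma$ lie below $\lambda_n$'' into the exact count $N_1 = \left\lfloor \frac{a}{a+b+c}n\right\rfloor$ of side-edge eigenvalues below $\lambda_n$; Sturm's theorem then gives exactly $N_1$ zeros on the side edges; Corollary~\ref{cor:nodal_bound} --- which you never invoke --- restricts $\phi_n$ to the two values $n-1$ and $n$; and only then does the evenness of the number $\phi_n - N_1$ of zeros on the cycle select exactly one of the two, yielding \eqref{eq:nodal_dihedral}. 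You have the right ingredients in spirit (interlacing, Sturm on the pendant edges, parity on the cycle), but the exact counting that makes the parity constraint decisive rather than merely consistent is missing.
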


\begin{remark}
  It can be shown that the hypothesis of the theorem is satisfied for
  all eigenvalues and eigenfunction for a generic choice of lengths
  $a$, $b$ and $c$ \cite{BerLiu_jmaa17}.

  It is interesting to observe that the sequence of nodal counts
  $\{\phi_n\}$ contains the information about the relative length of
  the central loop in the graph.
\end{remark}

Before we can prove the theorem, we need two auxiliary lemmas.

\newcommand{\tlambda}{{\tilde\lambda}}
\newcommand{\ts}{{\tilde{s}}}

\begin{lemma}
  Let $\lambda_n$ denote the eigenvalue of the dihedral graph with
  parameters $a$, $b$ and $c$ and let $\tlambda_n$ denote the ordered numbers
  from the set
  \begin{equation}
    \label{eq:tsigma}
    \tilde{\sigma} := \left\{ \frac{\pi}{2a} n_1 \right\}_{n_1\in \N}
    \cup \left\{ \frac{\pi}{2(b+c)} n_2 \right\}_{n_2\in \N}.
  \end{equation}
  Then 
  \begin{equation}
    \label{eq:interlace_dihedral}
    \lambda_n \leq \tlambda_n \leq \lambda_{n+1}, \qquad n=1,2,\ldots 
  \end{equation}
\end{lemma}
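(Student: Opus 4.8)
The plan is to realize $\tilde\sigma$ as the spectrum (square roots of eigenvalues) of a graph obtained from the dihedral graph by a single Neumann-to-Dirichlet switch or vertex merge, so that Lemma~\ref{lem:interlace_ND} (or Lemma~\ref{lem:interlace_gluing}) applies verbatim and gives exactly the interlacing~\eqref{eq:interlace_dihedral}. First I would recall the structure of the dihedral graph from Fig.~\ref{fig:dihedrals_numbering}: it consists of a pendant edge of length $a$ attached at a vertex $v$ to a cycle of length $2(b+c)$ (two edges of lengths $b+c$, or edges $b$ and $c$ on each side depending on the numbering), with a Dirichlet condition at the free end of the pendant edge and Neumann conditions elsewhere. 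Wait — I should re-read: in fact the dihedral graph here has a pendant edge $a$ with a Dirichlet vertex at the far end, a Neumann vertex $v$ of degree $3$ where the $a$-edge meets the cycle, and the cycle built from two arcs. The key point is that $v$ has degree $3$, with one pendant edge and two arcs closing up into a loop of total length $2(b+c)$.

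The natural move is: at the vertex $v$, split the degree-$3$ Neumann vertex into a degree-$1$ piece (the end of the pendant edge $a$) and a degree-$2$ piece (the two arcs, which by Section~\ref{sec:fake_vertex} merge into a single uninterrupted loop of length $2(b+c)$), imposing Dirichlet on the new degree-$1$ vertex at the end of the pendant edge and also on the new vertex where the loop is cut. Concretely: I would apply the Neumann-to-Dirichlet interlacing of Lemma~\ref{lem:interlace_ND} at $v$ once. Changing the Neumann condition at $v$ to Dirichlet disconnects the graph (as noted after~\eqref{eq:Dir_cond}) into: (i) an interval of length $a$ with Dirichlet at the original free end and Dirichlet at $v$ — whose eigenvalues are $(\pi n_1/a)^2$ — wait, that gives $\pi n_1/a$, not $\pi n_1/(2a)$. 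Let me reconsider the lengths: looking at $D(k)=\diag(z_a,z_b^2,z_c^2,\ldots)$, the exponents $z_b^2=e^{2ikb}$ suggest the arcs have length $2b$ and $2c$, so the loop has length $2(b+c)$ and the pendant edge has length $a$... but then I'd want $2a$. The resolution: the numbering in Fig.~\ref{fig:dihedrals_numbering} plausibly has the pendant edge of length $2a$ (or the $a$-edge appears twice as a folded pair). In any case, after the N-to-D switch at $v$ the graph becomes a Dirichlet interval of length $2a$ (eigenvalues $(\pi n_1/(2a))^2$, matching $\tlambda$ with $k=\pi n_1/(2a)$) plus a Neumann loop... no — cutting at $v$ leaves a loop with a Dirichlet point, which is a Dirichlet interval of length $2(b+c)$, eigenvalues $(\pi n_2/(2(b+c)))^2$. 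Its spectrum is exactly the second set in~\eqref{eq:tsigma}. So $\Gamma_\infty$ (the switched graph) has spectrum $\tilde\sigma$, $\Gamma_0$ is the dihedral graph, and~\eqref{eq:interlace_ND} reads $\lambda_n(\Gamma_0)\le\lambda_n(\Gamma_\infty)=\tlambda_n\le\lambda_{n+1}(\Gamma_0)$, which is precisely~\eqref{eq:interlace_dihedral}.

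The main obstacle is bookkeeping: correctly identifying from Fig.~\ref{fig:dihedrals_numbering} and the matrix $D(k)$ which lengths in the abstract graph correspond to $a$, $b$, $c$, so that the disjoint pieces after the switch come out as intervals of lengths $2a$ and $2(b+c)$ with Dirichlet ends; the factor of $2$ comes from the fact that the dihedral graph is itself a quotient of the symmetric graph of Fig.~\ref{fig:predihedral}(a) and its edges carry doubled lengths. Once the identification is pinned down, the proof is immediate from Lemma~\ref{lem:interlace_ND} together with Exercise~\ref{hw:Dir_interval} (for the interval spectra) and the degree-$2$ reduction of Section~\ref{sec:fake_vertex} (to see the two arcs as one interval after the cut). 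A secondary, minor point: one must note that $v$ is genuinely a Neumann vertex of degree $\ge 2$ in the dihedral graph, so that the N-to-D switch is exactly a rank-one perturbation of the quadratic form and Lemma~\ref{lem:interlace_ND} is applicable without modification.
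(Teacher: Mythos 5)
There is a genuine gap here, and it is not just bookkeeping: the strategy of a single Neumann-to-Dirichlet switch cannot produce the set \eqref{eq:tsigma}. The dihedral graph is not a lasso. From the matrix in \eqref{eq:SD_dihedral} together with $D(k)=\diag(z_a,z_b^2,z_c^2,z_a,z_a,z_b^2,z_c^2,z_a)$ you can read off that it has \emph{four} edges and \emph{two} degree-three attachment vertices: a cycle formed by two arcs of lengths $2b$ and $2c$ running between vertices $v_1$ and $v_2$, plus two pendant edges of length $a$ each, one with a Dirichlet free end (attached at $v_1$, say) and one with a Neumann free end (attached at $v_2$). Consequently, switching the condition at a single attachment vertex does not disconnect the graph into two intervals: a Dirichlet condition at $v_1$ leaves, besides a Dirichlet interval of length $a$, a connected three-star (the arcs of lengths $2b$ and $2c$ still joined at $v_2$ to the second pendant edge), whose spectrum is certainly not $\left\{\pi n/(2(b+c))\right\}$. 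Switching at both $v_1$ and $v_2$ does disconnect everything, but yields $\left\{\pi n/(2b)\right\}\cup\left\{\pi n/(2c)\right\}$ rather than $\left\{\pi n/(2(b+c))\right\}$, and in any case two switches only give $\lambda_n\le\tilde\lambda_n\le\lambda_{n+2}$, which is weaker than \eqref{eq:interlace_dihedral}.

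To obtain the arc contribution $\left\{\pi n/(2(b+c))\right\}$ one must combine a Dirichlet switch at one attachment vertex with a vertex \emph{splitting} (Lemma~\ref{lem:interlace_gluing} in reverse) at the other, so that the two arcs merge into a single Dirichlet interval of length $2(b+c)$ via Section~\ref{sec:fake_vertex}. Even then a single copy never produces $\tilde\sigma$: if the Dirichlet switch is at $v_1$ the two $a$-edges contribute $\left\{\pi n/a\right\}$ (Dirichlet--Dirichlet) and $\{0\}\cup\left\{\pi n/a\right\}$ (Neumann--Neumann), while if it is at $v_2$ they contribute the half-integer set $\left\{\pi(2n-1)/(2a)\right\}$ twice; only the union over the two differently modified copies reassembles $\left\{\pi n/(2a)\right\}=\left\{\pi n/a\right\}\cup\left\{\pi(2n-1)/(2a)\right\}$. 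This is precisely the content of the paper's proof: it starts from \emph{two} disjoint copies of the dihedral graph, imposes Dirichlet at the left attachment point of one and the right attachment point of the other, then splits the remaining attachment vertices; the doubled, hence doubly degenerate, spectra absorb the resulting index shift of two in the chains \eqref{eq:dihedral_inerlacing1} and \eqref{eq:dihedral_interlacing2} and deliver the tight one-step interlacing. Your one-switch shortcut would be valid for a genuine lasso with a Dirichlet-ended handle, but for this graph the missing idea is the doubling trick.
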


\begin{proof}
  Denote by $\sigma$ the spectrum of the dihedral graph.  Consider the
  following sequence of modifications of the graph $\Gamma$ consisting
  of two copies of the dihedral graph, Fig.~\ref{fig:dihedral_mod}(a).
  First we impose the Dirichlet condition on the left attachment point
  of one copy and the right attachment point of the other copy,
  obtaining the graph $\hat\Gamma$, see
  Fig.~\ref{fig:dihedral_mod}(b).  Second we separate the right edge
  of the first dihedral graph and the left edge of the second dihedral
  graph, imposing Neumann conditions on the newly formed vertices, see
  Fig.~\ref{fig:dihedral_mod}(c).  The first modification is covered
  by two applications of Lemma~\ref{lem:interlace_ND}, while the
  second by two applications of Lemma~\ref{lem:interlace_gluing} (in
  reverse).

  \begin{figure}
    \centering
    \includegraphics[scale=0.75]{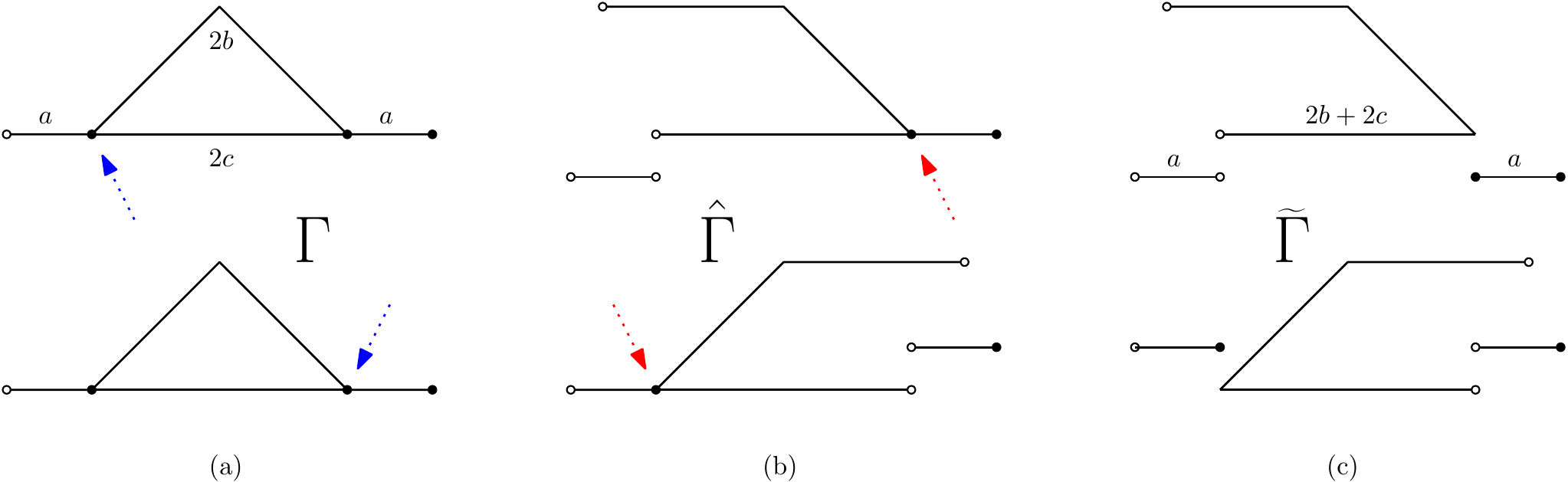}
    \caption{Modifications of the graph $\Gamma$, consisting of two
      copies of the dihedral graph, leading to interlacing
      inequalities \eqref{eq:dihedral_inerlacing1} and
      \eqref{eq:dihedral_interlacing2}.  The Dirichlet vertices are
      distinguished as empty cicrles.}
    \label{fig:dihedral_mod}
  \end{figure}

  Observe that the spectrum of the graph $\Gamma$ is
  $\sigma \cup \sigma$ (in the sense of multisets) and the spectrum of
  the final graph $\widetilde\Gamma$ is
  $\{0\} \cup \tilde{\sigma} \cup \tilde{\sigma}$.  We will denote the
  eigenvalues of the three stages by $s_n$, $\hat{s}_n$ and
  $\tilde{s}_n$ correspondingly.  Because of the degeneracies in the
  spectrum and the interlacing Lemmas, we have the following
  inequalities,
  \begin{equation}
    \label{eq:dihedral_inerlacing1}
    0 < s_2 = s_1 \leq \hat{s}_1 \leq \hat{s}_2 \leq s_4 = s_3 \leq
    \hat{s}_3 \leq \hat{s}_4 \leq \ldots
  \end{equation}
  and
  \begin{equation}
    \label{eq:dihedral_interlacing2}
    \ts_1 = 0 < \hat{s}_1 \leq \ts_3 = \ts_2 \leq \hat{s}_2 
    \leq \hat{s}_3 \leq \ts_5 = \ts_4 \leq \hat{s}_4 \leq \ldots
  \end{equation}
  Combining the two, we obtain
  \begin{equation*}
    \ts_1 = 0 < s_2 = s_1 \leq \ts_3 = \ts_2 \leq s_4 = s_3 
    \leq \ts_5 = \ts_4 \leq \ldots
  \end{equation*}
  Now the claim of the Lemma follows by observing that $s_1 =
  \lambda_2$, $\ts_2 = \tlambda_1$, $s_4 = \lambda_2$, $\ts_4 = \tlambda_2$ etc.
\end{proof}

\begin{lemma}
  \label{lem:students_lemma}
  Let $S_1 = \left\{ \frac{n_1}{\alpha} \right\}_{n_1\in\N}$ and $S_2
  = \left\{ \frac{n_2}{\beta} \right\}_{n_2\in\N}$ with positive
  $\alpha$ and $\beta$.  If
  \begin{equation}
    \label{eq:students_lemma}
    \# \left\{ S_1 \cup S_2 \leq \lambda \right\} = n-1
    \qquad \mbox{then} \qquad
    \# \left\{ S_1 \leq \lambda \right\}
    = \left\lfloor\frac{\alpha}{\alpha+\beta}n\right\rfloor.
  \end{equation}
\end{lemma}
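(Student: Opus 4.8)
The plan is to reformulate the hypothesis and conclusion purely in terms of counting multiples of $1/\alpha$ and $1/\beta$ below $\lambda$. Writing $a = \lfloor \alpha\lambda \rfloor = \#\{S_1 \leq \lambda\}$ and $b = \lfloor \beta\lambda \rfloor = \#\{S_2 \leq \lambda\}$ (using that elements of $S_1$ are $n_1/\alpha \leq \lambda \iff n_1 \leq \alpha\lambda$, so there are $\lfloor \alpha\lambda\rfloor$ of them, and similarly for $S_2$), the hypothesis becomes $a + b = n-1$, i.e.\ $n = a + b + 1$, and the claim is that $a = \lfloor \frac{\alpha}{\alpha+\beta}(a+b+1) \rfloor$. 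Here I should be slightly careful: the union $S_1 \cup S_2$ in the hypothesis is presumably a multiset (so coincidences $n_1/\alpha = n_2/\beta$ are counted with multiplicity), matching the way $\tilde\sigma$ is used in the previous lemma; with that convention $\#\{S_1\cup S_2 \leq \lambda\} = a + b$ exactly. (If instead one wants genuine sets, one adds the nondegeneracy assumption that $\alpha/\beta$ is irrational, so no coincidences occur; I would state this.)

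So the whole lemma reduces to the elementary inequality: if $a = \lfloor \alpha\lambda\rfloor$ and $b = \lfloor\beta\lambda\rfloor$, then
\begin{equation*}
  a \leq \frac{\alpha}{\alpha+\beta}(a+b+1) < a+1.
\end{equation*}
The left inequality $a(\alpha+\beta) \leq \alpha(a+b+1)$ simplifies to $a\beta \leq \alpha(b+1)$, i.e.\ $a\beta \leq \alpha b + \alpha$; since $a \leq \alpha\lambda$ and $b+1 > \beta\lambda$, we get $a\beta \leq \alpha\lambda\beta < \alpha(b+1)$, which gives it (in fact with strict inequality, but $\leq$ suffices). The right inequality $\alpha(a+b+1) < (a+1)(\alpha+\beta)$ simplifies to $\alpha b < (a+1)\beta$, i.e.\ $\alpha b < a\beta + \beta$; since $b \leq \beta\lambda$ and $a+1 > \alpha\lambda$, we get $\alpha b \leq \alpha\beta\lambda < (a+1)\beta$. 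Both halves thus follow from the two defining floor inequalities $a \leq \alpha\lambda < a+1$ and $b \leq \beta\lambda < b+1$ by a one-line manipulation, and taking $\lfloor\,\cdot\,\rfloor$ of the displayed double inequality yields $\lfloor \frac{\alpha}{\alpha+\beta}n\rfloor = a$.

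The main ``obstacle'' is not analytic at all — it is bookkeeping: pinning down whether the union is a multiset or a set, and hence whether $\#\{S_1\cup S_2 \leq \lambda\}$ equals $a+b$ on the nose. Once that is fixed the argument is just the two floor estimates above. I would also remark that the statement is exactly the combinatorial identity underlying the classical Beatty/Fraenkel-type description of how two arithmetic progressions interleave, which is why it produces the floor expression $\lfloor \frac{b+c}{a+b+c} n\rfloor$ appearing in Theorem~\ref{thm:nodal_dihedral}.
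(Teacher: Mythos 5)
Your proof is correct and follows essentially the same route as the paper's: both reduce the statement to the two floor inequalities $N_1\le\alpha\lambda<N_1+1$, $N_2\le\beta\lambda<N_2+1$, cross-multiply by $\beta$ and $\alpha$, and combine to trap $\frac{\alpha}{\alpha+\beta}n$ strictly between $N_1$ and $N_1+1$. Your aside about the multiset convention for $S_1\cup S_2$ is a sensible clarification the paper leaves implicit, but it does not change the argument.
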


\begin{remark}
  This Lemma was offered as an exercise to the attendees of 2015
  S\'eminaire de Math\'ematiques Sup\'erieures ``Geometric and
  Computational Spectral Theory'' at CRM, University of Montreal.  At
  that time, the author did not know a good proof.

  Several students, including Luc Petiard, Arseny Rayko, Lise Turner
  and Saskia Vo\ss{} submitted proofs.  The proof below is based on
  Arseny's proof with elements borrowed from other attendees' versions.
\end{remark}

\begin{proof}
  Let
  \begin{equation}
    \label{eq:Ndef}
    \# \left\{ S_1 \leq \lambda \right\} 
    = \lfloor \lambda \alpha \rfloor =: N_1, 
    \qquad 
    \#\left\{ S_2 \leq \lambda \right\} 
    = \lfloor \lambda \beta \rfloor =: N_2,
  \end{equation}
  then $N_1 + N_2 = n-1$ and also
  \begin{align*}
    N_1 \leq &\ \lambda \alpha < N_1 + 1, \\
    N_2 \leq &\ \lambda \beta < N_2 + 1.
  \end{align*}
  Multiplying the first inequality by $\beta$ and the second by
  $\alpha$ and going through the middle term, we get
  \begin{equation*}
    \beta N_1 < \alpha (N_2+1), \qquad \alpha N_2 < \beta (N_1+1).
  \end{equation*}
  Adding $\alpha N_1$ to the first and $\alpha(N_1+1)$ to the second
  inequalities we get
  \begin{equation*}
    (\alpha+\beta) N_1 < \alpha (N_1+N_2+1) < (\alpha+\beta)(N_1+1).
  \end{equation*}
  Dividing by $\alpha+\beta$ and using $N_1+N_2+1 = n$ we get
  \begin{equation*}
    N_1 < \frac{\alpha}{\alpha+\beta}n < N_1+1,
  \end{equation*}
  which implies the desired result.
\end{proof}

\begin{proof}[Proof of Theorem~\ref{thm:nodal_dihedral}]
  The eigenvalue $\lambda_n$ of the dihedral graph has $n-1$
  eigenvalues from $\tilde{\sigma}$ below it, so by
  Lemma~\ref{lem:students_lemma}, applied with $\alpha = 2a/\pi$ and
  $\beta = 2(b+c)/\pi$ it has $N_1 = \left\lfloor \frac{a}{a+b+c}n
  \right\rfloor$ eigenvalues from the sequence
  $\left\{\frac{\pi}{2a}\right\}$, which are precisely the eigenvalues
  of the side edges of the dihedral graph but with the Dirichlet
  conditions at the attachment points (i.e.\ the eigenvalues of the
  signle edge components of the graph $\hat\Gamma$).

  By Sturm's theorem, the eigenfunction of the dihedral graph
  corresponding to $\lambda_n$ has $N_1$ zeros on the side edges.
  Therefore $\phi_n - N_1$ zeros lie on the cycle and this quantity
  must be even.  But we know from Corollary~\ref{cor:nodal_bound} that
  $\phi_n$ is either $n-1$ or $n$.  To make $\phi_n - N_1$ even we
  must choose
  \begin{equation*}
    \phi_n = n - \mod_2\left(n-N_1\right) 
    % = n - \mod_2\left(n-\left\lfloor \frac{a}{a+b+c}n
    %   \right\rfloor\right) 
    = n - \mod_2\left( \left\lfloor \frac{b+c}{a+b+c}n
      \right\rfloor \right).
  \end{equation*}
\end{proof}

%%%%%%%%%
\subsection{Nodal count of a mandarin graph}
\label{sec:nodal_mandarin}

When the bounds of equation (\ref{eq:nodal_bound}) were discovered,
the following question arose: for an arbitrary graph, do all allowed
numbers in the range $0, 1, \ldots, \beta$ appear as the nodal surplus
$\phi_n - (n-1)$ of some eigenfunction?  It turns out the answer is
no.

\begin{theorem} [Band--Berkolaiko--Weyand \cite{BanBerWey_jmp15}]
  \label{thm:mandarin_mag_surplus_bounds} 
  Let $\Gamma$ be a mandarin graph of 2 vertices connected by $d$
  edges.  If the eigenvalue number $n>1$, is simple and the
  corresponding eigenfunction does not vanish on vertices, then the
  nodal surplus $\sigma_{n}$ satisfies
  \begin{equation}
    1\leq\sigma_{n}\leq\beta-1.
    \label{eq:mandarin_mag_surplus_bounds}
  \end{equation}
  In particular, the nodal surplus of the 3-mandarin graph is equal to
  1 for all eigenfunctions except the first, whose nodal surplus is
  always 0.
\end{theorem}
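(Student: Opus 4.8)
The plan is to read the theorem through the magnetic--nodal connection of Theorem~\ref{thm:main_mag}. For the $d$-mandarin one has $\beta = |E| - |V| + 1 = d-1$, so \eqref{eq:mandarin_mag_surplus_bounds} asserts exactly that, for $n>1$, the critical point $\boldsymbol\alpha = 0$ of the magnetic eigenvalue $\lambda_n(\boldsymbol\alpha)$ has Morse index strictly between $0$ and $\beta$; equivalently, $\boldsymbol\alpha = 0$ is neither a local minimum nor a local maximum of $\lambda_n$ on the flux torus $(\R/2\pi\Z)^{d-1}$. The case $n=1$ is immediate: the first eigenfunction is the positive constant, so $\phi_1 = 0 = 1-1$ and $\sigma_1 = 0$ (this is also the content of the diamagnetic inequality, which makes $\boldsymbol\alpha = 0$ the global minimum of $\lambda_1$). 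Thus everything reduces to producing, for $n>1$, one flux direction along which $\lambda_n$ strictly increases and one along which it strictly decreases.

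I would next exploit the reflection symmetry. The mandarin carries the involution $R$ interchanging the two vertices and reversing every edge; at zero flux this is a symmetry, and as in Section~\ref{sec:mandarin_fact} it splits the operator into an even part, isomorphic to the Schr\"odinger operator on the $d$-star $\Gamma_N$ with Neumann conditions at all leaves, and an odd part, isomorphic to the $d$-star $\Gamma_D$ with Dirichlet conditions at all leaves (all edge lengths halved). Hence $\sigma(\Gamma) = \sigma(\Gamma_N) \cup \sigma(\Gamma_D)$ at zero flux, and since $\lambda_n$ is simple its eigenfunction $\psi$ is purely even or purely odd and is the $m$-th eigenfunction of $\Gamma_N$, respectively $\Gamma_D$; because $\psi$ does not vanish at $v_1,v_2$, the corresponding eigenvalue of the star is simple and the eigenfunction does not vanish at the centre (in the Neumann case, at no vertex of $\Gamma_N$). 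The stars are trees, so by the tree version of Sturm's theorem quoted after Exercise~\ref{hw:zeros_count} that eigenfunction has exactly $m-1$ zeros; planting two reflected copies then gives $\phi_n = 2(m-1)$ in the even case and $\phi_n = 2(m-1)+d$ in the odd case, the extra $d$ zeros sitting at the edge midpoints fixed by $R$. Moreover Exercise~\ref{hw:even_zeros} forces the numbers of zeros of $\psi$ on the $d$ parallel edges all to have the same parity, so $\phi_n$ is even in the even case and $\equiv d \pmod 2$ in the odd case; this is the parity input.

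For the bounds themselves I would compare $\sigma(\Gamma)$ with an explicitly solvable graph. Imposing Dirichlet conditions at $v_1$ and then at $v_2$ --- two rank-one modifications, hence two applications of Lemma~\ref{lem:interlace_ND} --- turns the mandarin into $d$ disjoint intervals $[0,L_i]$ with Dirichlet ends, whose spectrum $\{(s\pi/L_i)^2 : s\ge 1,\ i\}$ is known from Exercise~\ref{hw:Dir_interval}. Writing $k = \sqrt{\lambda_n}$ and comparing counting functions via \eqref{eq:interlacing_Weyl} applied twice, one finds that $\sigma_n$ differs by at most $1$ from the explicit integer $J(k) = \#\{\,i : \sin(kL_i) < 0\,\}$; together with the a~priori range $0 \le \sigma_n \le \beta$ of Corollary~\ref{cor:nodal_bound} and the parity information above, this is meant to squeeze $\sigma_n$ into $\{1,\ldots,\beta-1\}$, and for $d=3$ (where that window is the single value $1$) to give the sharp statement at once.

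The hard part is precisely this last squeeze. The abstract interlacing inequalities of Lemmas~\ref{lem:interlace_ND} and~\ref{lem:interlace_gluing}, however cleverly applied, only reproduce the weak bound $0 \le \sigma_n \le \beta$ of Corollary~\ref{cor:nodal_bound}, and the parity constraint removes the endpoints $0$ and $\beta$ only in some residual cases; eliminating them in general genuinely requires the finer structure. I expect the cleanest route to be a direct computation of $\Hess(\lambda_n)$ at $\boldsymbol\alpha = 0$ by second-order perturbation theory: since the magnetic velocity operator is odd under $R$ while $\psi$ has a definite parity, only intermediate eigenfunctions of the \emph{opposite} parity contribute to the resolvent term, so the Hessian decomposes into an ``even$\to$odd'' and an ``odd$\to$even'' block, and for $n>1$ one must show that at least one block is nonempty with the sign making the form indefinite. (Equivalently, one may argue via Morse theory on the flux torus that the minimum and maximum of $\lambda_n$ are attained at half-flux corners $\boldsymbol\alpha \in \{0,\pi\}^{d-1}$ other than the origin, where the operator again decomposes into two trees.) Controlling that sign --- ruling out that $\psi$ is ``extremal'' --- is where the real work lies, and is presumably the point at which the explicit secular determinant \eqref{eq:Sigma_mandarin} and the symmetry-induced factorization \eqref{eq:fact_Sigma} must be used.
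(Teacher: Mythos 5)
You should first note that the paper does not prove this theorem at all --- it is quoted from \cite{BanBerWey_jmp15} --- so the only thing to assess is your proposal on its own terms, and there the gap is real (and you flag it yourself). Every ingredient you actually establish --- the reduction via Theorem~\ref{thm:main_mag}, the even/odd quotient decomposition into the Neumann and Dirichlet $d$-stars $\Gamma_N$, $\Gamma_D$ with halved lengths, the counts $\phi_n=2(m-1)$ and $2(m-1)+d$, the evenness of the number of zeros on cycles, and the interlacing comparisons --- is compatible with $\sigma_n=0$ and with $\sigma_n=\beta$, so no combination of these steps can yield \eqref{eq:mandarin_mag_surplus_bounds}. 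To see the obstruction concretely: in the even case write $\lambda_n(\Gamma)=\lambda_m(\Gamma_N)$; since the spectrum of $\Gamma$ is the multiset union of the spectra of the two quotients, $n-1=(m-1)+\#\{\mu\in\sigma(\Gamma_D):\mu<\lambda_n\}$, hence $\sigma_n=(m-1)-\#\{\mu\in\sigma(\Gamma_D):\mu<\lambda_n\}$. For $d=3$ the theorem is therefore \emph{equivalent} to the refined interlacing $\lambda_{m-2}(\Gamma_D)<\lambda_m(\Gamma_N)<\lambda_{m-1}(\Gamma_D)$, whereas Lemma~\ref{lem:interlace_ND} applied at the three leaves only yields $\lambda_{m-3}(\Gamma_D)\le\lambda_m(\Gamma_N)\le\lambda_m(\Gamma_D)$: a window of four consecutive gaps of $\sigma(\Gamma_D)$, of which the theorem must single out exactly one. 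No further applications of Lemmas~\ref{lem:interlace_ND} and~\ref{lem:interlace_gluing}, nor the comparison with the fully Dirichlet-decoupled intervals and your quantity $J(k)$, can shrink a rank-one window below the size it inherently has. The parity of zeros on cycles cannot exclude the endpoints either, because the parity of $n$ (equivalently, of the counting function of the complementary quotient at $\lambda_n$) is not controlled by anything you prove.

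The step you defer to ``the real work'' --- showing that the Hessian of $\lambda_n(\boldsymbol\alpha)$ at $\boldsymbol\alpha=0$ is indefinite for every $n>1$ --- \emph{is} the theorem. Your structural remark that the magnetic velocity operator is odd under $R$ and therefore couples only states of opposite parity is correct and is a reasonable starting point, but you give no argument that the resulting quadratic form has at least one positive and one negative direction, and no mechanism for ruling out the ``extremal'' cases; as written, the proposal establishes nothing beyond Corollary~\ref{cor:nodal_bound}. Two smaller cautions: the tree nodal theorem you invoke for the Neumann-star quotient is stated in the paper (Exercise~\ref{hw:zeros_count}) only for stars with Dirichlet endpoints, so the Neumann version needs to be cited rather than used silently; and your claim that the parity constraint removes the endpoints ``in some residual cases'' is unsubstantiated for the same reason as above.
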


It is interesting to combine this observation with the result of
Theorem~\ref{thm:main_mag}.  The Morse index is never equal to 0 (for
$n>1$) or to the dimension $\beta$, therefore the point
$\boldsymbol{\alpha}=0$ is never a minimum or a maximum.  But the
space of all possible magnetic fluxes is a $\beta$-dimensional torus,
which is compact.  Therefore the extrema must be achieved somewhere!

There are other ``standard'' points where the eigenvalue
$\lambda_n(\boldsymbol{\alpha})$ always have a critical point.  It is
relatively straightforward to extend Theorem~\ref{thm:main_mag} to
points $\left(b_1, \ldots, b_\beta\right)$, where each $b_i$ is 0
or $\pi$.  But it is also straightforward to extend
Theorem~\ref{thm:mandarin_mag_surplus_bounds} (see
\cite{BanBerWey_jmp15} for both extensions) which will show that those
critical point are also never extrema.

\begin{figure}
  \centering
  \includegraphics[scale=0.85]{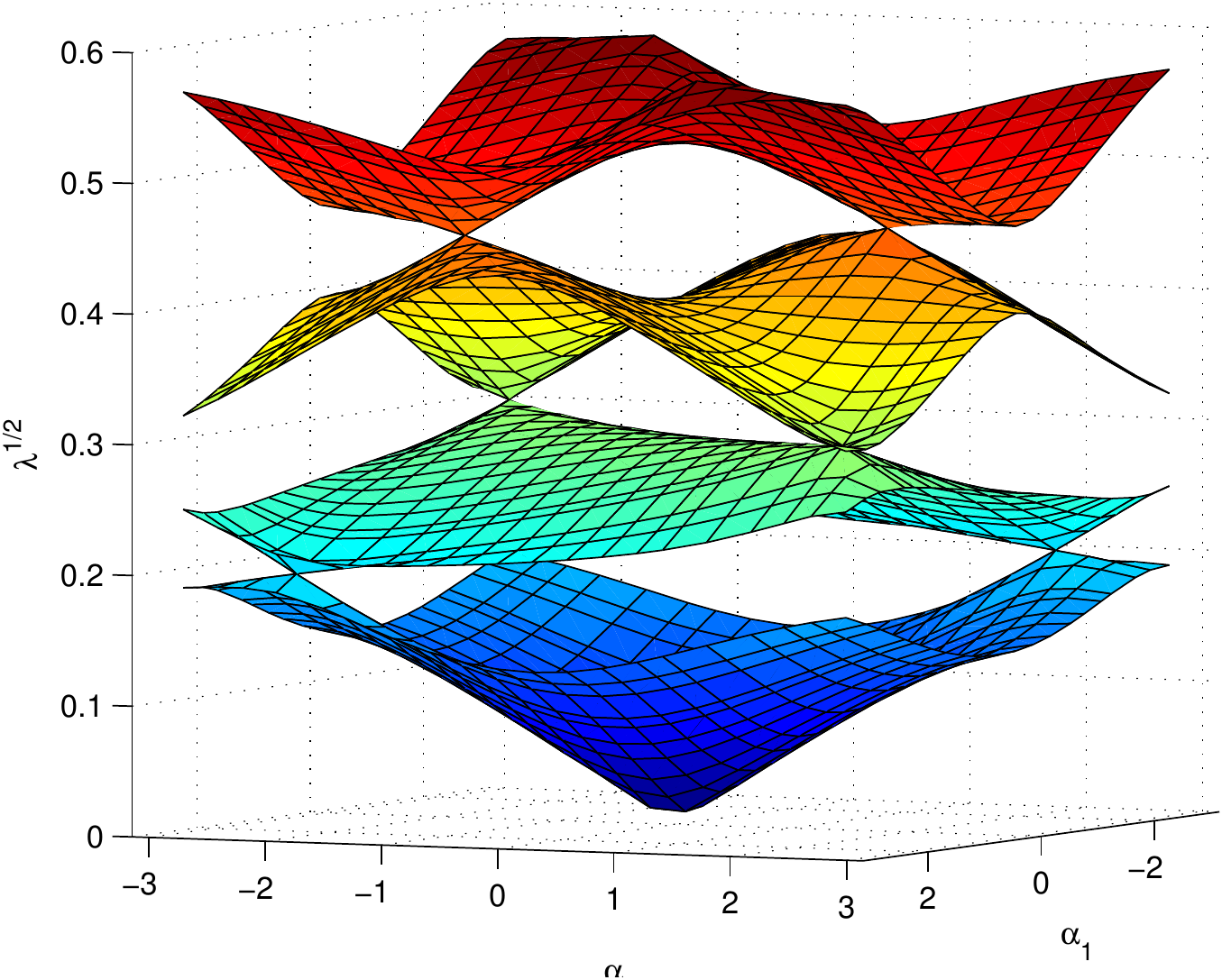}
  \caption{The first four eigenvalues of the mandarin graph as
    functions of two magnetic fluxes $\alpha_1$ and $\alpha_2$.  The
    surfaces can be seen to be touching at conical singularities, the
    so-called Dirac points.}
  \label{fig:f_mandarin_disp}
\end{figure}

The missing extrema turn out to be achieved at singular points of
$\lambda_n(\boldsymbol{\alpha})$, see Fig.~\ref{fig:f_mandarin_disp}
for an example.  Such conical singularities are sometimes called the
Dirac points and their appearance in 3-mandarin graphs is intimately
related to their appearance in the dispersion relation of graphene
\cite{KucPos_cmp07}.  This connection, however, lies outside the scope
of this article.

%%%%%%%%%%%%%%%%%%%%%%%%%%%%
% \section{Infinite periodic graphs}  % Skip for now
% 
% **** Floquet theory ****
% 
% **** Example: distorted graphene ****

%%%%%%%%%%%%%%%%%%%%%%%%%%%%
% \section{Secular manifold and ergodic flow}
%
% **** A very useful tool ****

%%%%%%%%%%%%%%%%%%%%%%%%%%%%
% \section{Generic simplicity of spectrum}
%
% **** Just the result ****

%%%%%%%%%%%%%%%%%%%%%%%%%%%%
% \section{Resonances and resonant eigenfunctions}
%

%%%%%%%%%%%%%%%%%%%%%%%%%%%%%%%%%%%%%%%%%%%%%%%%
\section{Concluding remarks}

There are many interesting topics within the area of spectral theory
of metric graphs that we did not cover.  A very partial list (heavily
biased towards personal preferences of the author) is as follows:

\begin{itemize}
\item Generic properties of eigenfunctions and eigenvalues.  Many
  results, such as Theorem~\ref{thm:main_mag} require the eigenvalue
  to be simple and the eigenfunction to be non-vanishing on vertices.
  These properties are generic with respect to small perturbations of
  the edge lengths (that need to break all symmetries of the graph).
  Results in this direction can be found in \cite{Fri_ijm05},  \cite{CdV_ahp15}(for
  eigenvalues) and \cite{BerLiu_jmaa17} (for both eigenvalues and eigenfunctions).
\item Ergodic flow on the secular manifold.  Barra and Gaspard
  \cite{BarGas_jsp00} introduced an interpretation of the secular
  determinant equation as an ergodic flow piercing a compact manifold
  (more precisely, an algebraic variety) given by solutions of an
  equation like (\ref{eq:Sigma_dihedral}) on a torus.  This
  interpretation leads to many surprising and very general results,
  including those of \cite{BanBer_prl13} and \cite{CdV_ahp15}.
\item Spectral theory of infinite periodic graphs yields a fruitful
  connection to the theory of compact graphs with magnetic field.  The
  background is covered in Chapter 4 of \cite{BerKuc_graphs}; a sample
  of results can be found in \cite{HarKucSob_jpa07,ExnKucWin_jpa10},
  \cite{KucPos_cmp07}, and \cite{BanBer_prl13,BanBerWey_jmp15}.
\item Graphs make a very interesting setting to study resonances.
  Work in this direction has attracted many researchers from across
  the field \cite{KotSmi_prl00}, \cite{TexMon_jpa01},
  \cite{DavPus_apde11}, \cite{KotScha_wrm04}, \cite{GnuSchSmi_prl13},
  \cite{ExnLip_amp07,MR2593999,ExnLip_pla11}.
\item There is an ongoing effort to find bounds on the graph
  eigenvalues (especially the lowest non-zero ones) in terms of the
  geometric properties of the graph, see
  \cite{K2M2_ahp16,BanLev_prep16,Ari_prep16} for the latest results
  and references.
\end{itemize}

\bibliographystyle{plain}
\bibliography{bk_bibl,additional}

\end{document}